\newtheorem{theorem}{Theorem}[section]
\newtheorem{definition}{Definition}[section]
\newtheorem{lemma}{Lemma}[section]
\newtheorem{proposition}{Proposition}[section]
\newenvironment{proof}{\par \noindent
            {\bf Proof. \hspace{2mm}}}{\hfill$\Box$ \vspace*{3mm}}
\def\bra#1{\langle #1 |}
\def\ket#1{| #1 \rangle}
\def\kb#1#2{\ket{#1}\bra{#2}}
\def\braket#1#2{\bra{#1}{#2}\rangle}
\def\udn#1{#1\in\{0,1\}^n}
\newcommand{\nat}{\mathbb{N}}
\begin{document}
\title{\Large\bf 
Non-Interactive Statistically-Hiding Quantum Bit\\
Commitment from Any Quantum One-Way Function}
\author{Takeshi Koshiba \and Takanori Odaira}
\date{Graduate School of Science and Engineering, Saitama University}
\maketitle

\begin{abstract}
We provide a non-interactive quantum bit commitment scheme which 
has statistically-hiding and computationally-binding properties from any
quantum one-way function. Our protocol is basically a parallel composition
of the previous non-interactive quantum bit commitment schemes
(based on quantum one-way permutations,
due to Dumais, Mayers and Salvail (EUROCRYPT 2000))
with pairwise independent hash functions.
To construct our
non-interactive quantum bit commitment scheme from any quantum one-way function,
we follow the procedure below: (i) from Dumais-Mayers-Salvail scheme to a
weakly-hiding and 1-out-of-2 binding commitment (of a parallel variant); 
(ii) from the weakly-hiding and 1-out-of-2 binding commitment to
a strongly-hiding and 1-out-of-2 binding commitment; (iii)
from the strongly-hiding and 1-out-of-2 binding commitment to 
a normal statistically-hiding commitment.
In the classical case, statistically-hiding bit commitment scheme 
(by Haitner, Nguyen, Ong, Reingold and Vadhan (SIAM J. Comput., Vol.39, 2009))
is also constructible from any one-way function. While the classical statistically-hiding 
bit commitment has large round complexity, our quantum scheme is non-interactive, 
which is advantageous over the classical schemes. 
A main technical contribution is to provide a quantum analogue of
the new interactive hashing theorem, due to Haitner and Reingold (CCC 2007). 
Moreover, the parallel composition enables us to
simplify the security analysis drastically.
\end{abstract}

\begin{quote}
{\bf Keyword}: quantum bit commitment, quantum one-way function, non-interactive
\end{quote}

\section{Introduction}
A bit commitment is a fundamental cryptographic protocol between two parties.
The protocol consists of two phases: commit phase and reveal phase. 
In the commit phase, the sender, say Alice, has a bit $b$ in her private space 
and she wants to commit $b$ to the receiver, say Bob. 
They exchange messages and at the end of the commit phase
Bob gets some information that represents $b$. In the reveal phase, 
Alice confides $b$ to Bob by exchanging messages. At the end of the reveal phase,
Bob judges whether the information gotten in the reveal phase really represents
$b$ or not.
Basically, there are three requirements for secure bit commitment:
the correctness, the hiding property and the binding property. 
The correctness guarantees that if both parties are honest then, 
for any bit $b\in\{0,1\}$ Alice has, Bob accepts with certainty.
The hiding property guarantees that (cheating) Bob cannot reveal 
the committed bit during the commit phase. The binding property guarantees that
(cheating) Alice cannot commit her bit $b$ such that Alice maliciously
reveal $b\oplus 1$ as her committed bit but Bob accepts.

In the classical case, a simple argument shows the impossibility
of bit commitment with the hiding and the binding properties both statistical.
Thus, either hiding or binding must be computational.
A construction of statistically-binding scheme 
from any pseudorandom generator
was given by Naor \cite{Naor91}. Since the existence of one-way functions
is equivalent to that of pseudorandom generators \cite{HILL99},
the statistically-binding scheme can be based on any one-way function.
A construction of statistically-hiding scheme ({\sf NOVY} scheme)
from one-way permutation was given by Naor, Ostrovsky, Venkatesan
and Yung \cite{NOVY98}. After that, the assumption of the existence
of one-way permutation was relaxed to that of approximable-preimage-size
one-way function \cite{HHKKMS05}. Finally, Haitner and Reingold \cite{HR07b}
showed that a statistically-hiding scheme ({\sf HNORV} scheme \cite{HNORV09}) can 
be based on any one-way function.

Since statistically-binding (resp., statistically-hiding) 
bit commitment schemes are used as building block for zero-knowledge
proof (resp., zero-knowledge argument) systems \cite{GMW91,BCC88},
it is desirable to be efficient from several viewpoints
(e.g., the total size of messages exchanged during the protocol, 
or the round number of communications in the protocol). 
In general, the round complexity of statistically-hiding schemes is large
(see, e.g., \cite{HR07a,HHRS07}).

Let us move on the quantum case. After the unconditionally security
of the BB84 quantum key distribution protocol \cite{BB84} was shown,
the possibility of unconditionally secure quantum bit commitments had
been investigated. Unfortunately, the impossibility of unconditionally
secure quantum bit commitment was shown \cite{LC97,Mayers97}.
After that, some relaxations such as quantum string commitment \cite{Kent03}
or cheat-sensitive quantum bit commitment \cite{ATVY00,HK04,BCHLW08} have
been studied. 

In this paper, we take the computational approach as in the classical case. 
Along this line, 
Dumais, Mayers and Salvail \cite{DMS00} showed a construction
of perfectly-hiding quantum bit commitment scheme ({\sf DMS} scheme)
based on quantum one-way permutation. 
The non-interactivity in {\sf DMS} scheme is advantageous over
the classical statistically-hiding bit commitments. Unfortunately, we have
not found any candidate of quantum one-way permutation, because known
candidates for classical one-way permutation are no longer one-way
in the quantum setting due to Shor's algorithm \cite{Shor97}. 
Koshiba and Odaira \cite{KO09} observed that the binding property
of {\sf DMS} scheme holds for any quantum one-way functions
and showed that
any approximable-preimage-size quantum one-way function
suffices for the statistical hiding property.

In this paper, we further generalize 
statistically-hiding quantum bit commitment schemes in \cite{DMS00,KO09} and
show that a statistically-hiding quantum
bit commitment is constructible from any {\em general\/} quantum one-way 
function without losing the non-interactivity.
We basically follow the steps of the proof in \cite{HNORV09}. Thus,
we remark the similarity and differences.
\begin{itemize}\itemsep=0pt\parsep=0pt
\item 
As in {\sf HNORV} scheme \cite{HNORV09},
we consider to construct a 1-out-of-2 binding commitment scheme
based on any
quantum one-way function
as an intermediate scheme.
Our 1-out-of-2 binding commitment scheme executes in parallel
two commitment schemes where one of the two commitment schemes 
satisfies the binding property and the other does not have to satisfy
the binding property. 
Note that any adversary for the classical 1-out-of-2 binding commitment
(of the serial composition) cannot see the second commitment just after
getting the first commitment. But, in our case, the adversary can get both
the first and the second commitments, which may be correlated. Thus, we have
to cope with the adversary that can have more information.
One of important technical tools in \cite{HNORV09} is so-called
``new interactive hashing theorem'' \cite{HR07a}. 
We provides a quantum analogue of the new interactive hashing theorem.
\item 
Since the resulting 1-out-of-2 binding commitment scheme 
satisfies the hiding property only in a weak sense, 
some hiding amplification technique is applied to yield a
1-out-of-2 binding commitment scheme with the hiding property in a strong sense.
To this end, we just consider the repetitional use of quantum one-way function
and show that the simple repetition works for the hiding amplification.
In \cite{HNORV09}, an amplification procedure is recursively iterated
and an iterative analysis is made. 
Due to the parallel composition, we can drastically simplify the security analysis
of the hiding amplification.
\item Finally, we construct a (normal) statistically-hiding quantum bit commitment
from the 1-out-of-2 binding commitment scheme. Unlike {\sf HNORV} scheme, 
we do not use, in this step, the technique of universal one-way hash functions,
which requires interactions.
\end{itemize}

\medskip
\noindent
{\it Remark.} In the quantum setting, there are several definitions
for the binding property of commitment schemes. In \cite{DFRSS07}, 
a satisfactory definition is given. Nonetheless, we adopt a weaker
definition as in \cite{DMS00} and construct a non-interactive quantum
bit commitment scheme based on the weak definition. It seems much
more difficult to prove the security of our construction according
to the definition in \cite{DFRSS07} by using our techniques.
I believe that the weak definition is sufficient for some applications.
Actually, a construction of quantum oblivious transfer from a quantum string 
commitment (of a special type) with a similar weak binding condition was 
given in \cite{CDMS04}.

\section{Preliminaries}
\subsection{Notations and Conventions}
We denote the $m$-dimensional Hilbert space by ${\sf H}_m$.
Let $\{\ket{0},\ket{1}\}$ denote the computational basis for ${\sf H}_2$.
When the context requires, we write $\ket{b}_{+}$ to denote $\ket{b}$
in the computational basis. Let $\{\ket{0}_{\times},\ket{1}_{\times}\}$ denote
the diagonal basis, where $\ket{0}_{\times}=\frac{1}{\sqrt{2}}(\ket{0}+\ket{1})$
and $\ket{1}_{\times}=\frac{1}{\sqrt{2}}(\ket{0}-\ket{1})$.
For any $x=x_1x_2\cdots x_n\in \{0,1\}^n$ and $\theta\in\{+,\times\}$,
$\ket{x}_\theta$ denotes the state $\otimes_{i=1}^n \ket{x_i}_\theta$.
We denote $\ket{0}\otimes \cdots \otimes \ket{0}$ by $\ket{\bf 0}$.
For projections, we denote ${\cal P}^0_{+}=\kb{0}{0}$, ${\cal P}^1_{+}=\kb{1}{1}$, 
${\cal P}^0_{\times}=\ket{0}_{\times}\bra{0}$, and
${\cal P}^1_{\times}=\ket{1}_{\times}\bra{1}$.
For any $\udn{x}$, we denote ${\cal P}^x_{+}=\otimes_{i=1}^n {\cal P}^{x_i}_{+}$
and ${\cal P}^x_{\times}=\otimes_{i=1}^n {\cal P}^{x_i}_{\times}$. For the sake of
simplicity, we also write ${\cal P}^x$ instead of ${\cal P}^x_{+}$.
We define $\theta(0) = +$ and $\theta(1)=\times$. Thus, for any $w\in \{0,1\}$,
$\{{\cal P}^x_{\theta(w)}\}_{\udn{x}}$ is the von Neumann measurement.
For density matrices $\sigma$ and $\rho$, we define 
$\delta(\sigma,\rho)\stackrel{\rm def}{=}\|\sigma-\rho\|_1$, 
where $\|A\|_1 = \frac{1}{2}{\rm tr}\sqrt{A^\dag A}$. 
For two classical random variables $X$ and $Y$, there exists the corresponding
density matrices $\rho_X$ and $\rho_Y$. Since $\delta(\rho_X,\rho_Y)$ also
represents the variation distance (a.k.a. statistical distance) between $X$ and $Y$,
we sometimes write $\delta(X,Y)$ instead of $\delta(\rho_X,\rho_Y)$.
We denote the min-entropy of a random variable $X$ by ${\bf H}_\infty(X)$
and the Renyi entropy (of order 2) by ${\bf H}_2(X)$.
%
We denote the uniform distribution over $\{0,1\}^n$ by $U_n$. For a set $A$,
we sometimes use the same symbol to denote the uniform distribution over the set $A$.
A function $\nu: \mathbb{N}\rightarrow \mathbb{R}$ is {\em negligible\/} if
for every polynomial $p$ there exists $n_0\in\nat$ such for all $n\ge n_0$,
$\nu(n)<1/p(n)$. We denote a set of integers $\{i\in\mathbb{N}: n_1\le i\le n_2\}$ by
$[n_1,n_2]$.

\subsection{Quantum One-Way Functions}
In order to give definitions of quantum one-way functions, we have to decide
a model of quantum computation. In this paper, we consider (uniform or non-uniform) quantum circuit family. 
As a universal quantum gate set, we take the
controlled-NOT, the one-qubit Hadamard gate, and arbitrary one-qubit
non-trivial rotation gate. The computational complexity of a circuit $\cal C$
is measured by the number of elementary gates (in the universal gate set)
contained in $\cal C$ and denoted by ${\it size}({\cal C})$.
For any circuit family ${\cal C}=\{{\cal C}_n\}_{n\in\nat}$,
if ${\it size}({\cal C}_n)$ is bounded by $p(n)$ for some polynomial $p$,
${\cal C}$ is called p-size circuit family.

Let $f=\{f_n:\{0,1\}^n \rightarrow \{0,1\}^{\ell (n)}\}_{n\in\nat}$
be a function family. To compute $f$, we need a circuit family 
$\{{\cal C}_n\}_{n\in\nat}$ where ${\cal C}_n$ is a circuit on $m(n)\ge \ell(n)$
qubits. To compute $f_n(x)$ for $\udn{x}$, we apply 
${\cal C}_n$ to $\ket{x}\otimes \ket{0}^{\otimes m(n)-n}$.
The output of ${\cal C}_n$ is obtained by the von Neumann measurement
in the computational basis on $\ell(n)$ qubits.

\begin{definition}\rm
A function family $f=\{f_n:\{0,1\}^n \rightarrow \{0,1\}^{\ell (n)}\}_{n\in\nat}$
is $s(n)$-{\em secure quantum one-way\/} if
\begin{itemize}
\item there exists a $p$-size circuit family ${\cal C}=\{{\cal C}_n\}_{n\in\nat}$
such that, for all $n\ge 1$ and all $\udn{x}$,
${\cal C}_n(\ket{x}\otimes \ket{{\bf 0}}) = f_n(x)$ with certainty;
\item for every $p$-size circuit family ${\cal B}=\{{\cal B}_n\}_{n\in\nat}$
and for sufficiently large $n$,
\[ 
\Pr[f_n({\cal B}_n(f_n(U_n)))=f_n(U_n)]<1/s(n). 
\]
\end{itemize}
If $f$ is $p(n)$-secure quantum one-way for every polynomial $p$, then $f$ is said to
be {\em p-secure}.
\end{definition}

Quantum one-way function $f$ is said to be $r(n)$-regular if
for any $y\in {\rm supp}(f(U_n))$, $|\{x\in \{0,1\}^n : f_n(x)=y\}|=2^{r(n)}$.
Without loss of generality, we can consider quantum one-way functions that are
length-preserving, that is, $\ell(n)=n$, because general quantum one-way functions
can be converted into ones that are length-preserving.

\subsection{Quantum Bit Commitment}
In a non-interactive quantum bit commitment scheme, honest Alice with her bit $w\in\{0,1\}$
starts with a system ${\sf H}_{\rm all}={\sf H}_{\rm keep}\otimes
{\sf H}_{\rm open}\otimes {\sf H}_{\rm commit}$ in the initial state
$\ket{\bf 0}$, executes a quantum circuit ${\cal C}_{n,w}$ on $\ket{\bf 0}$
returning the final state $\ket{\psi_w}\in {\sf H}_{\rm all}$ and
finally sends the subsystem ${\sf H}_{\rm commit}$ to Bob in the reduced state
$\rho_B(w)={\rm tr}_A(\kb{\psi_w}{\psi_w})$, where Alice's Hilbert space
is ${\sf H}_A = {\sf H}_{\rm keep}\otimes {\sf H}_{\rm open}$.
For $w\in\{0,1\}$, we call $\rho_B(w)$ {\em $w$-commitment state}.
Once the system ${\sf H}_{\rm commit}$ (or, $w$-commitment state) is sent to Bob, 
Alice has only access
to $\rho_A(w)={\rm tr}_B(\kb{\psi_w}{\psi_w})$, where Bob's Hilbert space
is ${\sf H}_B={\sf H}_{\rm commit}$. To reveal the commitment, Alice needs
only to send the system ${\sf H}_{\rm open}$ together with $w$. Bob then checks
the value of $w$ by measuring the system ${\sf H}_{\rm open}\otimes {\sf H}_{\rm commit}$ with some measurement that is fixed by the protocol in view of $w$.
Bob obtains $w=0$, $w=1$, or $w=\bot$ when the value of $w$ is rejected.

Cheating Alice must start with the state $\ket{\bf 0}$ of some system
${\sf H}_{\rm all}={\sf H}_{\rm extra}\otimes 
{\sf H}_A \otimes {\sf H}_{\rm commit}$. A quantum circuit ${\cal D}_n$
that acts on ${\sf H}_{\rm all}$ is executed to obtain a state $\ket{\psi}$
and the subsystem ${\sf H}_{\rm commit}$ is sent to Bob. Later,
any quantum circuit ${\cal O}_n$ which acts on ${\sf H}_{\rm extra}\otimes
{\sf H}_{\rm keep}\otimes {\sf H}_{\rm open}$ can be executed before
sending the subsystem ${\sf H}_{\rm open}$ to Bob. The important
quantum circuits which act on ${\sf H}_{\rm extra}\otimes
{\sf H}_{\rm keep}\otimes {\sf H}_{\rm open}$ are the quantum circuits
${\cal O}_{n,0}$ (resp., ${\cal O}_{n,1}$) which maximizes the probability
that bit $w=0$ (resp., $w=1$) is revealed with success.
Therefore, any attack can be modeled by triplets of quantum circuits
$\{({\cal D}_n, {\cal O}_{n,0}, {\cal O}_{n,1})\}_{n\in\nat}$.

Let $b_0(n)$ (resp., $b_1(n)$) be the probability that she succeeds
to reveal 0 (resp., 1) using the corresponding optimal circuit ${\cal O}_{n,0}$
(resp., ${\cal O}_{n,1}$). The definition of $b_w(n)$ explicitly
requires that the value of $w$, which cheating Alice tries to open,
is chosen not only before the execution of the measurement on
${\sf H}_{\rm open}\otimes {\sf H}_{\rm commit}$ by Bob but also
before the execution of the circuit ${\cal O}_{n,w}$ by cheating Alice.

In the quantum setting, it is pointed out in \cite{Mayers97} that 
the requirement ``$b_0(n)=0\lor b_1(n)=0$'' for the binding condition 
is too strong. 
Thus, we adopt a weaker condition 
$b(n)\stackrel{\rm def}{=}b_0(n)+b_1(n)-1\le \varepsilon$
where $\varepsilon(n)$ is negligible, which is the same condition
as in \cite{DMS00}.

Since we consider the computational binding, we modify the above discussion
so as to fit the computational setting. Instead of the triplet
$({\cal D}_n, {\cal O}_{n,0}, {\cal O}_{n,1})$, we
consider a pair $({\cal D}_{n,0},{\cal U}_n)$. If we set
${\cal D}_{n,0} = ({\cal O}_{n,0}\otimes {\cal I}_{\rm commit})\cdot {\cal D}_n$,
and ${\cal U}_n = {\cal O}_{n,1}\cdot {\cal O}_{n,0}^\dag$, we can easily
see that the adversary's strategy does not change.
Note that ${\cal D}_{n,0}$ acts in ${\sf H}_{\rm all}$ and
${\cal U}_n$ is restricted to act only in ${\sf H}_{\rm extra}\otimes
{\sf H}_{\rm keep}\otimes {\sf H}_{\rm open}$.

\begin{definition}\rm
A non-interactive quantum bit commitment is $t(n)$-{\em computationally-binding\/}
if, for every a family $\{({\cal D}_{n,0},{\cal U}_n)\}_{n\in\nat}$ 
of p-size circuit pairs, $b(n)$ is bounded by $t(n)$.
If $t(n)$ is negligible in $n$, 
the non-interactive quantum bit commitment is simply said 
to be {\em computationally-binding}.
\end{definition}

\begin{definition}\rm
A non-interactive quantum bit commitment is $t(n)$-{\em statistically-binding\/}
if $b(n)\le t(n)$.
If $t(n)$ is negligible in $n$,
the non-interactive quantum bit commitment is simply said 
to be {\em statistically-hiding}.
\end{definition}

As mentioned, a satisfactory definition for the binding property of 
quantum bit commitment schemes is given by Damg{\aa}rd, Fehr, Renner,
Salvail and Schaffner \cite{DFRSS07}. Actually, they show that a
variant of {\sf DMS} scheme satisfies the binding condition in \cite{DFRSS07}.
However, it is still unclear whether the inverting quantum one-way permutation
is reducible to violating the binding condition. 
We rather adopt a weaker definition in \cite{DMS00} in order to
benefit from the computational reducibility.

\subsection{Pairwise Independent Hash Functions}
Let $H = \{ H_n \}_{n\in\nat}$ be a sequence of function families,
where each $H_n$ is a family of functions mapping binary strings of
length $\ell(n)$ to strings of length $v(n)$. We say that $H_n$ is a
{\em pairwise independent\/} (a.k.a. strongly 2-universal) {\em hash family\/} 
if for any distinct $x,x'\in \{0,1\}^{\ell(n)}$
and $y,y'\in  \{0,1\}^{v(n)}$, 
$\Pr_{h\leftarrow H_n}[h(x)=y\land h(x')=y'] = 2^{-2v(n)}$. 
(See, e.g., \cite{CW79} for an implementation of pairwise independent hash family.)

One of the useful applications of pairwise independent hash family is smoothing
the min-entropy of given distribution. 
\begin{lemma}\label{lem:lhl}(Leftover Hash Lemma)
Let $V_n$ be a random variable over $\{0,1\}^{\ell(n)}$ 
such that ${\bf H}_\infty(V_n)\ge \lambda_n$
and $H_n$ be a pairwise independent hash family where each $h\in H_n$ maps
strings of length $\ell(n)$ to 
strings of length $\lambda_n -2\log(\varepsilon^{-1})$.
Then, we have $\delta( (H_n,H_n(V_n)), (H_n,U_{v(n)}) )\le \varepsilon$.
\end{lemma}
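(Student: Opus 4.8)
The statement to be proved is the Leftover Hash Lemma (Lemma 2.2), a standard result. Let me sketch a proof.

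The plan is to prove the Leftover Hash Lemma via a second-moment (collision probability) argument, which is the classical route and does not require any quantum machinery. First I would recall that the statistical distance between any distribution and the uniform distribution on a set of size $M$ is bounded by $\frac{1}{2}\sqrt{M\cdot{\rm cp} - 1}$, where ${\rm cp}$ denotes the collision probability of the distribution; this follows from Cauchy--Schwarz applied to $\sum_z |p_z - 1/M|$, relating the $\ell_1$ norm to the $\ell_2$ norm. So it suffices to bound the collision probability of the joint random variable $(H_n, H_n(V_n))$, where $H_n$ is drawn uniformly from the pairwise independent family and then applied to an independent sample of $V_n$.

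Next I would compute that collision probability directly. Draw two independent copies $(h, h(v))$ and $(h', h'(v'))$ with $h,h'$ independent uniform hash functions and $v,v'$ independent copies of $V_n$. A collision requires $h = h'$ (probability $1/|H_n|$) and, conditioned on that, $h(v) = h(v')$. Split the latter into the case $v = v'$, which happens with probability at most $2^{-\lambda_n}$ by the min-entropy bound ${\bf H}_\infty(V_n)\ge\lambda_n$, and the case $v\ne v'$, in which pairwise independence forces $\Pr[h(v)=h(v')] = 2^{-v(n)}$ where here the output length is $v(n) = \lambda_n - 2\log(\varepsilon^{-1})$. Hence ${\rm cp}(H_n,H_n(V_n)) \le \frac{1}{|H_n|}\left(2^{-\lambda_n} + 2^{-v(n)}\right) \le \frac{1}{|H_n|}\cdot 2\cdot 2^{-v(n)}$, using $2^{-\lambda_n}\le 2^{-v(n)}$. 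Plugging $M = |H_n|\cdot 2^{v(n)}$ into the collision-to-distance bound gives $\delta \le \frac{1}{2}\sqrt{2^{v(n)-\lambda_n}} = \frac{1}{2}\sqrt{2^{-2\log(\varepsilon^{-1})}} = \frac{1}{2}\varepsilon \le \varepsilon$, as claimed.

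The only mildly delicate point — and the one I would state carefully rather than hand-wave — is the inequality relating statistical distance to collision probability, i.e. that for a distribution $p$ on a domain of size $M$ one has $\|p - U_M\|_1 \le \frac{1}{2}\sqrt{M\,{\rm cp}(p) - 1}$; everything else is a routine conditioning computation. One should also note the bookkeeping that $H_n$ is a \emph{fixed} uniform object appearing on both sides, so the relevant ``distribution'' is genuinely the pair, and the domain size $M$ must include the $|H_n|$ factor — getting that right is what makes the $\log(\varepsilon^{-1})$ bookkeeping come out correctly. Since the random variables involved are all classical, the translation to density matrices mentioned in the Preliminaries is immediate and the quantum notation $\delta(\cdot,\cdot)$ coincides with the classical variation distance, so no further argument is needed there.
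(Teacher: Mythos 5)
Your proof is sound, and there is in fact nothing in the paper to compare it against: the paper states this lemma without proof, importing it as the standard Leftover Hash Lemma, and your second-moment (collision-probability) argument is precisely the classical proof one would cite for it. One point to tighten in the write-up: the intermediate simplification ${\rm cp}(H_n,H_n(V_n))\le \frac{2}{|H_n|}2^{-v(n)}$ is true but useless here, since plugging it into $\delta\le\frac12\sqrt{M\,{\rm cp}-1}$ with $M=|H_n|\,2^{v(n)}$ only gives $\delta\le\frac12$. The bound your final line actually relies on is the unsimplified one, ${\rm cp}\le\frac{1}{|H_n|}\bigl(2^{-\lambda_n}+2^{-v(n)}\bigr)$, for which the $-1$ exactly cancels the $2^{-v(n)}$ contribution and leaves $M\,{\rm cp}-1\le 2^{v(n)-\lambda_n}=\varepsilon^2$, hence $\delta\le\varepsilon/2\le\varepsilon$; so the simplified inequality should simply be dropped. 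It is also worth stating explicitly that the case $v=v'$ is handled by $\Pr[v=v']=\sum_x\Pr[V_n=x]^2\le 2^{-\mathbf{H}_\infty(V_n)}\le 2^{-\lambda_n}$, since that is the only place the min-entropy hypothesis enters, and to note (as you do) that the paper's $\|\cdot\|_1$ already carries the factor $\frac12$, so your collision-to-distance inequality is stated consistently with the paper's conventions. With those details pinned down, the argument is complete.
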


\section{Base Scheme}
Dumais, Mayers and Salvail \cite{DMS00} gave a non-interactive statistically-hiding
quantum bit commitment based on quantum one-way permutation. Koshiba and Odaira \cite{KO09}
observed that {\sf DMS} scheme still satisfies the computational binding if we replace 
quantum one-way permutation with general quantum one-way function. So, we consider to
use the scheme as an important ingredient of the construction of our non-interactive statistically-hiding
quantum bit commitment based on quantum one-way function.

We briefly review the scheme.
Let $f=\{ f_n : \{0,1\}^n \rightarrow \{0,1\}^{n} \}_{n\in\nat}$
be a function family. The quantum bit commitment scheme takes
the security parameter $n$ and the description of function family $f$
as common inputs.
For given $f$ and the security parameter $n$, Alice and Bob determine $f_n$.
The protocol, called {\sf Base Protocol}, is described in Figure \ref{fig:bp}.

\begin{figure}[htbp]
\hrule
\medskip
\noindent
{\bf Commit Phase}:
\begin{enumerate}\parsep=0pt\itemsep=0pt
\item Alice with her bit $w$
first chooses $x\in \{0,1\}^n$ uniformly and computes $y=f_n(x)$.
\item Next, Alice sends the quantum state 
$\ket{f_n(x)}_{\theta(w)}\in {\sf H}_{\rm commit}$ to Bob.
\item
Bob then stores the received quantum state until Reveal Phase.
\end{enumerate}
{\bf Reveal Phase}:
\begin{enumerate}\parsep=0pt\itemsep=0pt
\item Alice first announces $w$ and $x$ to Bob.
\item Next, Bob measures $\rho_B$ with measurement
$\{P^y_{\theta(w)}\}_{y\in {\it range}(f_n)}$ and obtains the classical
output $y'\in {\it range}(f_n)$. 
\item Lastly, Bob accepts if and only if $y'=f_n(x)$.
\end{enumerate}
\hrule
\caption{{\sf Base Protocol}}\label{fig:bp}
\end{figure}

\begin{proposition}{\rm (Implicit in \cite{DMS00} and explicit in \cite{KO09})}\label{prop:hide}
Let $f=\{f_n:\{0,1\}^n \rightarrow \{0,1\}^{n}\}_{n\in\nat}$ be
a family of (not necessarily quantum one-way) functions
such that $\delta(f(U_n),U_n')$ is negligible in $n$.
Then, {\sf Base Protocol} is statistically hiding.
\end{proposition}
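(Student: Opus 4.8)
The plan is to show that the two commitment states $\rho_B(0)$ and $\rho_B(1)$ are statistically close, so that Bob cannot distinguish a commitment to $0$ from a commitment to $1$. Recall that in {\sf Base Protocol}, when Alice commits to $w$ she picks $x \leftarrow U_n$, sets $y = f_n(x)$, and sends $\ket{y}_{\theta(w)}$. Hence the commitment state is $\rho_B(w) = \sum_{y} \Pr[f_n(U_n) = y]\, {\cal P}^y_{\theta(w)}$, i.e. a classical mixture (indexed by $y \sim f_n(U_n)$) of the basis states $\ket{y}_{+}$ when $w=0$ and $\ket{y}_{\times}$ when $w=1$. The intuition is that if $f_n(U_n)$ is statistically indistinguishable from the uniform distribution $U_n'$ on $\{0,1\}^n$, then each of $\rho_B(0)$ and $\rho_B(1)$ is close to the maximally mixed state $\frac{1}{2^n} I$, which is the well-known fact that a uniformly random BB84-type string is the totally mixed state regardless of basis.

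The key steps, in order, are as follows. First I would define the ``idealized'' commitment states $\tilde\rho_B(w) = \sum_{y} 2^{-n}\, {\cal P}^y_{\theta(w)}$ obtained by replacing $f_n(U_n)$ with $U_n'$; here $\tilde\rho_B(0) = \tilde\rho_B(1) = 2^{-n} I$ since $\{{\cal P}^y_+\}_y$ and $\{{\cal P}^y_\times\}_y$ are both resolutions of the identity (this uses $\ket{x}_\theta = \otimes_i \ket{x_i}_\theta$ and the single-qubit identity $\kb{0}{0}+\kb{1}{1} = \ket{0}_\times\bra{0} + \ket{1}_\times\bra{1} = I_2$). Second, I would bound $\delta(\rho_B(w), \tilde\rho_B(w))$: since both states are classical mixtures of the \emph{same} orthonormal family $\{{\cal P}^y_{\theta(w)}\}_y$ with mixing weights $\Pr[f_n(U_n)=y]$ and $2^{-n}$ respectively, the trace distance between them equals exactly the variation distance between these two weight vectors, namely $\delta(f(U_n), U_n')$, which is negligible by hypothesis. (This is just the statement that $\delta$ contracts under — in fact here is preserved by — the embedding of a probability distribution into a fixed orthonormal basis.) Third, I would combine via the triangle inequality: $\delta(\rho_B(0), \rho_B(1)) \le \delta(\rho_B(0), \tilde\rho_B(0)) + \delta(\tilde\rho_B(0), \tilde\rho_B(1)) + \delta(\tilde\rho_B(1), \rho_B(1)) \le \delta(f(U_n),U_n') + 0 + \delta(f(U_n),U_n')$, which is negligible. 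This closeness of the $0$- and $1$-commitment states is exactly the statistical hiding property (no measurement by Bob on ${\sf H}_{\rm commit}$ can distinguish the two cases with non-negligible advantage).

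I do not expect a serious obstacle here; the only point requiring a little care is the second step — verifying that the trace distance between two diagonal density matrices in a common basis is precisely the $\ell_1$/variation distance of their eigenvalue vectors (so that the convention $\|A\|_1 = \frac12 \mathrm{tr}\sqrt{A^\dagger A}$ matches the variation-distance convention used for $\delta(f(U_n),U_n')$). One should also note that the eigenbasis differs between $w=0$ (the $+$ basis) and $w=1$ (the $\times$ basis), but that is irrelevant for the second step since there we compare $\rho_B(w)$ to $\tilde\rho_B(w)$ for a \emph{fixed} $w$, and it is handled in the first step by the exact identity $\tilde\rho_B(0)=\tilde\rho_B(1)=2^{-n}I$. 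Everything else is the triangle inequality.
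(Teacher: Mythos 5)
Your proof is correct, and since the paper itself does not reprove Proposition \ref{prop:hide} (it defers to \cite{DMS00,KO09}), the natural comparison is with the standard argument there and with the hiding proofs the paper does give (e.g., Theorem \ref{thm:fowfhide}), which follow exactly your pattern: show each $w$-commitment state is within $\delta(f(U_n),U_n')$ of the maximally mixed state $2^{-n}I$ in its own eigenbasis, note the two ideal states coincide, and conclude by the triangle inequality. Your handling of the conventions (the $\tfrac12$ in $\|\cdot\|_1$ matching variation distance, and the basis-resolution identity $\sum_y {\cal P}^y_{+}=\sum_y {\cal P}^y_{\times}=I$) is exactly the care needed, so there is nothing to add.
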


\begin{proposition}{\rm (Implicit in \cite{DMS00} and explicit in \cite{KO09})}\label{prop:bind}
Let $f=\{f_n:\{0,1\}^n \rightarrow \{0,1\}^{n}\}_{n\in\nat}$ be
an $s(n)$-secure quantum one-way function family. Then {\sf Base Protocol} is 
$O(1/\sqrt{s(n)})$-computationally binding.
\end{proposition}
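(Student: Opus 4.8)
The plan is to reduce inverting the one-way function $f_n$ to breaking the binding condition $b(n) = b_0(n) + b_1(n) - 1 > O(1/\sqrt{s(n)})$. First I would recall the geometry of the situation. A cheating Alice runs ${\cal D}_{n,0}$ to produce a state $\ket{\psi}$ on ${\sf H}_{\rm all}$ and sends ${\sf H}_{\rm commit}$ to Bob. To reveal $w=0$ she announces some $x_0$ and Bob projects ${\sf H}_{\rm open}\otimes{\sf H}_{\rm commit}$ with the measurement associated to the basis $\theta(0)=+$, accepting iff the outcome is $f_n(x_0)$; to reveal $w=1$ she first applies ${\cal U}_n$ and announces some $x_1$, and Bob projects in basis $\theta(1)=\times$. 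The key point, exactly as in {\sf DMS}, is that the two acceptance projectors for a fixed pair $(x_0,x_1)$ with $f_n(x_0)=f_n(x_1)=y$ are ${\cal P}^y_+$ and ${\cal P}^y_\times$ (tensored with identity on the registers Bob does not see), and these are the projectors onto the computational and Hadamard eigenstates of the same string $y$. The standard complementarity bound states that for any state, the sum of the squared overlaps with a $+$-basis vector and a $\times$-basis vector is at most $1 + 2^{-n/2}$ (more precisely $1 + |\langle y|_+ \, y\rangle_\times| = 1 + 2^{-n}\cdot$ something; the relevant operator-norm bound is $\|{\cal P}^y_+ + {\cal P}^y_\times\| \le 1 + 2^{-n/2}$). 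So if Alice commits honestly to a single $y$, binding already holds with $\varepsilon = 2^{-n/2}$.

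Second, I would handle the fact that a general cheating Alice need not commit to a single classical $y$: she may send a superposition or mixture over many values of $y$, and moreover she is free to choose which $x_0$ (resp. $x_1$) in the preimage set $f_n^{-1}(y)$ to announce. This is where the one-way assumption enters. The idea is: suppose $b_0(n) + b_1(n) = 1 + \eta$ for non-negligible $\eta$. Build an inverter ${\cal B}_n$ for $f_n$ as follows. Given a challenge $y^\* = f_n(U_n)$, simulate the commitment by preparing a state whose commit register is close to $\ket{y^\*}_+$ (or use Alice's own ${\cal D}_{n,0}$ and condition/postselect appropriately), then run ${\cal O}_{n,0}$ to extract a candidate $x_0$ and separately run ${\cal O}_{n,1}$ to extract a candidate $x_1$; output whichever of $x_0,x_1$ satisfies $f_n(\cdot)=y^\*$. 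The analysis, via the complementarity bound applied register-by-register (or via a Kraus-operator / Jordan's-lemma decomposition of the pair of projectors), shows that conditioned on both reveals succeeding Alice must "know" a genuine preimage, and that the probability both succeed is at least roughly $\eta$ (by inclusion–exclusion on $b_0+b_1$), while the loss from the non-interactive extraction and from the $2^{-n/2}$ slack is controlled. Pushing the quadratic-overlap bound through an amplitude-squared-to-amplitude argument is what produces the $\sqrt{s(n)}$ rather than $s(n)$: if the extractor succeeds with probability $p$, the underlying amplitude on a correct preimage is $\sqrt{p}$-ish, so one gets an inverter of advantage $\Omega(\eta^2)$, forcing $\eta^2 < 1/s(n)$, i.e. $b(n) = \eta < O(1/\sqrt{s(n)})$.

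The main obstacle I expect is exactly this step: carefully accounting for Alice's freedom to spread her commitment across many $y$'s and to pick an arbitrary preimage, and showing that this freedom does not help her beyond the $2^{-n/2}$ complementarity slack plus the reduction loss. Concretely one must show that the "success projectors" for revealing $0$ and for revealing $1$, maximized over all of Alice's ${\cal O}_{n,w}$ and announced strings, still obey an operator-norm bound of the form $1 + 2^{-n/2}$ plus a term that, if non-negligible, yields a non-negligible quantum inverter for $f_n$ — contradicting $s(n)$-security. I would lift the classical/permutation argument of \cite{DMS00} to this setting by decomposing the joint projector pair using Jordan's lemma into $2\times 2$ blocks, identifying in each block either a small-overlap pair (harmless) or a near-aligned pair (which pins down a preimage that the inverter reads off), and summing the block contributions. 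The regularity or preimage-size structure of $f_n$ is \emph{not} needed here — only one-wayness — which is the whole point of the Koshiba–Odaira observation being reproved; the hiding side (Proposition~\ref{prop:hide}) is the one that was weak, and binding is genuinely robust to $f$ being an arbitrary one-way function.
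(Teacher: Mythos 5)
Your high-level goal---a reduction from inverting $f_n$ to an adversary that opens both ways, with a quadratic loss yielding the $O(1/\sqrt{s(n)})$ bound---is the same as the paper's, but the heart of the argument, namely \emph{how the inverter feeds its external challenge $y=f_n(U_n)$ into the adversary's run}, is missing, and the two mechanisms you sketch do not work. The opening circuits ${\cal O}_{n,0},{\cal O}_{n,1}$ act only on ${\sf H}_{\rm extra}\otimes{\sf H}_{\rm keep}\otimes{\sf H}_{\rm open}$, so if you ``prepare a state whose commit register is close to $\ket{y^*}_+$'' the candidate openings are computed from Alice's side of \emph{her own} entangled state and bear no relation to $y^*$; and postselecting the output of ${\cal D}_{n,0}$ on its commit register agreeing with the challenge succeeds only with exponentially small probability, which destroys the reduction. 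The paper (following DMS) resolves exactly this point with an explicit efficient unitary ${\cal W}_n$ that, given $y$ in an auxiliary register, coherently implements (up to normalization) the diagonal-basis projection ${\cal P}_{\times}^{y,\rm commit}$ on Alice's state: a phase $(-1)^{\langle u,f_n(x)\rangle}$, an XOR of $f_n(x)$ (recomputed from ${\sf H}_{\rm open}$) and of $y$ into ${\sf H}_{\rm commit}$, then Hadamards. Since ${\cal U}_n={\cal O}_{n,1}{\cal O}_{n,0}^{\dag}$ does not touch ${\sf H}_{\rm commit}$, it commutes with ${\cal P}_{\times}^{y,\rm commit}$, so the inverter that applies ${\cal D}_{n,0}$, ${\cal W}_n$, ${\cal U}_n$ and measures ${\sf H}_{\rm open}$ obtains an element of $f_n^{-1}(y)$ with probability $\|{\cal P}_1{\cal U}_n{\cal P}_0\ket{\tilde{\psi}_{n,0}}\|^2\ge(\sqrt{b_1(n)}-\sqrt{1-b_0(n)})^2=\Omega(b(n)^2)$, where ${\cal P}_0,{\cal P}_1$ are the two acceptance projectors. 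Nothing playing the role of this challenge-injection step appears in your plan, and Jordan's lemma does not supply it: it can organize the geometry of the two acceptance projectors, but it does not make the adversary's announced preimage invert an externally chosen image.

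Two further steps of your sketch would also need repair. Running ${\cal O}_{n,0}$ to extract $x_0$ and then ``separately'' ${\cal O}_{n,1}$ to extract $x_1$ on the same copy ignores measurement disturbance; the paper's inverter runs a single coherent chain and measures ${\sf H}_{\rm open}$ only once, and the claim that ``the probability both succeed is at least roughly $\eta$ by inclusion--exclusion'' has no meaning for acceptance events defined by measurements in incompatible bases---it is precisely the amplitude/triangle-inequality estimate above that replaces it. Finally, your complementarity bound $1+2^{-n/2}$ applies to the rank-one projectors for a single fixed announced value $y$; for the actual acceptance projectors (sums over all announceable preimages) no such unconditional bound holds for a general one-way $f_n$ (a constant function gives $b_0(n)=b_1(n)=1$), so the entire weight of the proposition must be carried by the computational reduction you have not yet specified.
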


In this paper, we do not use directly the above properties. We need to generalize
Proposition \ref{prop:bind}.
In non-interactive commitment protocols, Alice sends a commitment $y$ in Commit Phase 
and a decommitment $x$ in Reveal Phase. To make Bob accept, the pair $(y,x)$ must 
be in some binary relation $R_n$. In case of {\sf Base Protocol},
the binary relation is defined as $R_n=\{(f_n(x),x):x\in\{0,1\}^n\}$. 
We can rephrase the statement of Proposition \ref{prop:bind}
in terms of the binary relation $R_n$. It says that if cheating Alice can output
distinct pairs
$(y,x)$ and $(y',x')$ both in $R_n$ such that 
the probability to reveal 0 with success by using $(x,y)$ is $b_0(n)$,
the probability to reveal 1 with success by using $(x',y')$ is $b_1(n)$,
and $b_0(n)+b_1(n)\ge 1 + \sqrt{s(n)}$, 
then there exists an algorithm
that, given $f_n(x)$ as input, outputs $x$ such that $(f_n(x),x) \in R_n$ with
probability $\Omega(s(n))$. 
Since {\sf Base Protocol} is based on quantum one-way function, 
the definition of $R_n$ is quite natural. On the other hand, we may define 
a binary relation as $R'_n=\{(f_n(x),x):x\in W_n\}$ by using some subset
$W_n\subseteq \{0,1\}^n$. 
We discuss a generalization of Proposition \ref{prop:bind} in the next section.

\section{Non-interactive Quantum Hashing Theorem}
The following theorem is a quantum correspondence\footnote{%
Exactly speaking, Theorem \ref{lem:L-bind} corresponds to a special case of the
new interactive hashing theorem in \cite{HR07a} and
the current form suffices for our purpose.
As in \cite{HR07a}, we
can derive a more general form of Non-interactive Quantum Hashing Theorem.}
of the new interactive hashing theorem in \cite{HR07a} and 
it is one of the most technical ingredients in this paper. 

\begin{theorem}\label{lem:L-bind}(Non-interactive Quantum Hashing Theorem)
Let $f=\{f_n:\{0,1\}^n \rightarrow \{0,1\}^{n}\}_{n\in\nat}$ be
an $s(n)$-secure quantum one-way function family.
Suppose that $W_n$ is a subset of $\{0,1\}^n$ and define the binary relation $R'_n$
as $R'_n=\{(f_n(x),x):x\in W_n\}$.
If there exists an algorithm against {\sf Base Protocol} that can output
distinct pairs
$(y,x)$ and $(y',x')$ both in $R_n'$ such that 
the probability to reveal 0 with success by using $(x,y)$ is $b_0(n)$,
the probability to reveal 1 with success by using $(x',y')$ is $b_1(n)$,
and $b_0(n)+b_1(n)\ge 1 + \sqrt{s(n)}$, 
then there exists another algorithm
that, given $y''\in f_n(W_n)$ as input, outputs $x''$ such that $(y'',x'') \in R_n'$ with
probability $\Omega(s(n))$, where $y''$ is propotionally selected from $f_n(W_n)$.
\end{theorem}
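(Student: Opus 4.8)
The plan is to transport the binding analysis of {\sf Base Protocol} due to Dumais--Mayers--Salvail and Koshiba--Odaira (the argument behind Proposition~\ref{prop:bind}) to the relativized relation $R'_n$; conceptually nothing genuinely new happens except that every decommitment produced now lives in $W_n$ and the challenge is drawn from $f_n(W_n)$ in place of $f_n(U_n)$. First I would fix the adversary against {\sf Base Protocol} in the form recalled in the preliminaries: a pair $({\cal D}_{n,0},{\cal U}_n)$ of p-size circuits, with ${\cal U}_n$ acting only on Alice's workspace ${\sf H}_{\rm extra}\otimes{\sf H}_{\rm keep}\otimes{\sf H}_{\rm open}$ and in particular leaving ${\sf H}_{\rm commit}$ untouched. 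Writing $\ket{\psi_0}={\cal D}_{n,0}\ket{\bf 0}$ and putting the announced decommitment in a designated register of Alice's workspace, I would introduce the two ``consistent-opening'' projectors $\Pi^{(0)}=\sum_{x\in W_n}\kb{x}{x}\otimes{\cal P}^{f_n(x)}_{+}$ and $\Pi^{(1)}=\sum_{x\in W_n}\kb{x}{x}\otimes{\cal P}^{f_n(x)}_{\times}$, acting on the announce-register together with ${\sf H}_{\rm commit}$. These are genuine projectors because the summands are already mutually orthogonal on the announce-register, and by construction $b_0=\|\Pi^{(0)}\ket{\psi_0}\|^2$ and $b_1=\|\Pi^{(1)}{\cal U}_n\ket{\psi_0}\|^2$, so the hypothesis reads $\|\Pi^{(0)}\ket{\psi_0}\|^2+\|\Pi^{(1)}{\cal U}_n\ket{\psi_0}\|^2\ge 1+\sqrt{s(n)}$.

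Next, using only that ${\cal U}_n$ is unitary and fixes ${\sf H}_{\rm commit}$, I would apply the triangle inequality $\|\Pi^{(1)}{\cal U}_n\Pi^{(0)}\ket{\psi_0}\|\ge\|\Pi^{(1)}{\cal U}_n\ket{\psi_0}\|-\|({\cal I}-\Pi^{(0)})\ket{\psi_0}\|=\sqrt{b_1}-\sqrt{1-b_0}$, which, squaring and invoking the hypothesis, is $\Omega(s(n))$. Operationally: measure $\ket{\psi_0}$ with $\{\Pi^{(0)},{\cal I}-\Pi^{(0)}\}$, apply ${\cal U}_n$, then measure with $\{\Pi^{(1)},{\cal I}-\Pi^{(1)}\}$, \emph{without} collapsing the announce-register in between; both measurements accept with probability $\|\Pi^{(1)}{\cal U}_n\Pi^{(0)}\ket{\psi_0}\|^2=\Omega(s(n))$, and the second stage then leaves the announce-register holding a decommitment $x''\in W_n$ with ${\sf H}_{\rm commit}$ consistent with $\ket{f_n(x'')}_{\times}$. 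That the announce-register must be kept coherent across ${\cal U}_n$ --- collapsing it first would cost a factor of order $2^{-n}$ since $\ket{y}_{+}$ and $\ket{y'}_{\times}$ only overlap with modulus $2^{-n/2}$ --- is the genuinely quantum ingredient, and this interplay of the unitary ${\cal U}_n$ with the two complementary bases is the step I expect to need the most care.

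Then I would turn this procedure into the inverter ${\cal B}_n$ claimed by the theorem: following the Dumais--Mayers--Salvail argument, on input $y''$ it runs ${\cal D}_{n,0}$ and, at the stage where the estimate above is applied, forces the content of ${\sf H}_{\rm commit}$ to encode $y''$, so that the surviving branch reveals a decommitment $x''\in W_n$ with $f_n(x'')=y''$. The delicate point --- and the place where the passage from the full domain of Proposition~\ref{prop:bind} to the subset $W_n$ must actually be tracked --- is that this ``planting'' of $y''$ has to cost only a negligible amount for \emph{every} adversary meeting the hypothesis. Here one uses that such an adversary is forced to keep ${\sf H}_{\rm commit}$ close to maximally mixed on the success branch (otherwise one of $b_0,b_1$ collapses to roughly $2^{-n}$); in particular $f_n(W_n)$ must be dense in $\{0,1\}^n$ for the hypothesis to be satisfiable, which is also why the conclusion suffers no density loss, and a $y''$ drawn proportionally from $f_n(W_n)$ is then statistically indistinguishable from the value the adversary itself commits. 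Assembling the two estimates yields $\Pr[(y'',{\cal B}_n(y''))\in R'_n]=\Omega(s(n))$ for $y''$ proportional over $f_n(W_n)$.

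Finally, ${\cal B}_n$ is assembled from ${\cal D}_{n,0}$, ${\cal U}_n$, polynomially many projective measurements and the preparation of a basis state, so it is p-size (with $W_n$ hard-wired, consistent with the non-uniform circuit model), completing the reduction. To summarize, the main obstacle is the complementarity estimate of the second paragraph together with the accounting in the third: showing that, for every adversary satisfying the hypothesis, planting a challenge from $f_n(W_n)$ is indistinguishable from the adversary's own commitment, so that the $\Omega(s(n))$ success probability is preserved rather than reduced by the density of the image.
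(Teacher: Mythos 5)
Your first two paragraphs coincide with the core estimate in the paper's proof: with ${\cal P}_0=\sum_{x\in W_n}{\cal P}^{x,\rm open}\otimes{\cal P}_{+}^{f_n(x),\rm commit}$ and ${\cal P}_1=\sum_{x\in W_n}{\cal P}^{x,\rm open}\otimes{\cal P}_{\times}^{f_n(x),\rm commit}$ one indeed has $\|{\cal P}_1{\cal U}_n{\cal P}_0\ket{\tilde{\psi}_{n,0}}\|\ge\sqrt{b_1}-\sqrt{1-b_0}$, exactly as in the paper. The genuine gap is in your third paragraph, i.e., in turning this two-measurement experiment (which outputs a decommitment for a value chosen by the adversary) into an inverter for an externally given challenge $y''$. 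You never specify the ``planting'' operation, and the justification you offer for its being essentially free --- that any adversary meeting the hypothesis keeps ${\sf H}_{\rm commit}$ close to maximally mixed on the success branch, so that a proportionally drawn $y''$ is ``statistically indistinguishable from the value the adversary itself commits'' --- is not implied by the hypothesis. The hypothesis is only a lower bound on $b_0+b_1$; the adversary may concentrate its success branch on an arbitrary sub-family of $W_n$ (say $W'\subset W_n$ with $f_n(W')$ of density merely $1/{\rm poly}$), in which case its committed-value distribution is far from the proportional distribution on $f_n(W_n)$ and nothing ``collapses to $2^{-n}$.'' Moreover, if the planting is implemented in the only generic way your sketch suggests --- measuring ${\sf H}_{\rm commit}$ in the diagonal basis (or postselecting) and keeping the run only when the outcome equals $y''$ --- then for each fixed challenge the conditioning costs a factor of order $2^{-n}$, and averaging over $y''$ with weights at most $1$ does not recover $\|{\cal P}_1{\cal U}_n{\cal P}_0\ket{\tilde{\psi}_{n,0}}\|^2$; your argument yields no useful lower bound on the inverter's success probability.

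What is missing is precisely the paper's main technical device: an explicit efficient unitary ${\cal W}_n$ on ${\sf H}_{\rm inv}\otimes{\sf H}_{\rm open}\otimes{\sf H}_{\rm commit}$ which, given the challenge $u$ in ${\sf H}_{\rm inv}$, maps $\ket{\psi_{n,0}}$ exactly to $2^{n/2}{\cal P}_{\times}^{u,\rm commit}\ket{\psi_{n,0}}$. It is built from a phase kickback producing $(-1)^{\langle u,f_n(x)\rangle}$, an XOR of $f_n(x)$ into the commit register, an XOR of the challenge, and a final Hadamard layer, and it works because every diagonal-basis component of $\ket{f_n(x)}_{+}$ has amplitude exactly $2^{-n/2}(-1)^{\langle u,f_n(x)\rangle}$. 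This coherent, lossless planting (rather than any distributional-indistinguishability argument) is what makes the success probability on challenge $y''$ equal to $2^{n}\|(\sum_{z\in f_n^{-1}(y'')}{\cal P}^{z,\rm open}){\cal P}_{\times}^{y'',\rm commit}{\cal U}_n{\cal P}_0\ket{\tilde{\psi}_{n,0}}\|^2$, so that averaging over $y''$ proportionally to the output distribution of $f_n$ on $W_n$ reproduces $\|{\cal P}_1{\cal U}_n{\cal P}_0\ket{\tilde{\psi}_{n,0}}\|^2$ with no density loss. Without ${\cal W}_n$ (or an equivalent construction) the reduction you outline does not go through, even granting your complementarity estimate.
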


If $W_n$ is closed to $U_n$ in the statement above, Non-interactive 
Quantum Hashing Theorem can be directly applied to construct an inverter of 
the quantum one-way function as in \cite{DMS00,KO09}.
However, if $W_n$ is far from $U_n$, it is not directly related to the inversion
of the quantum one-way function. In the next section, we discuss how to use it
even in the case where $W_n$ is far from $U_n$.

For the proof of Theorem \ref{lem:L-bind},
we can adapt the proof of Proposition \ref{prop:bind}.
In the original proof in \cite{DMS00} of Proposition \ref{prop:bind}, some 
``test circuits'' are utilized. The existence of test circuits is an obstacle
to the generalization. A careful analysis shows that such test circuits are
redundant. 

\begin{proof}
We separate the whole system into three parts: the system ${\sf H}_{\rm commit}$
that encodes the functional value, the system ${\sf H}_{\rm open}$ that
encodes inputs to the function, and the system ${\sf H}_{\rm keep}$
is the reminder of the system.

\medskip
\noindent
{\bf Perfect Case:}\\
In the perfect case, we can assume that
an adversary $\{{\cal D}_{n,0},{\cal U}_n\}_{n\in\nat}$
reveals the committed bit in both ways perfectly. 
That is, the states $\ket{\psi_{n,0}}$ (resp., $\ket{\psi_{n,1}}$) 
of the whole system when $w=0$ (resp., $w=1$) will be committed
can be written as follows.
\begin{eqnarray}
\ket{\psi_{n,0}} & = & \sum_{x\in W_n}\ket{\alpha_{0,x}}^{\rm keep}
\otimes \ket{x}^{\rm open} \otimes \ket{f_n(x)}^{\rm commit}_{+} = {\cal D}_{n,0}\ket{\bf 0}\quad\mbox{and} \nonumber\\
\ket{\psi_{n,1}} & = & \sum_{x\in W_n}\ket{\alpha_{1,x}}^{\rm keep}
\otimes \ket{x}^{\rm open} \otimes \ket{f_n(x)}^{\rm commit}_{\times} = {\cal U}_n\ket{\psi_{n,0}},\nonumber
\end{eqnarray}
where $\sum_{x\in W_n}\|\,\ket{\alpha_{0,x}}\|^2 
= \sum_{x\in W_n}\|\,\ket{\alpha_{1,x}}\|^2 = 1$.

Let ${\cal P}_{+}^{u,\rm commit}$ and ${\cal P}_{\times}^{u,\rm commit}$
be the projection operators
${\cal P}_{+}^{u}$ and ${\cal P}_{\times}^{u}$ respectively,
acting in ${\sf H}_{\rm commit}$. We are interested in
properties on the state 
$\ket{\varphi_{n,0}^u}={\cal P}_{\times}^{u,\rm commit}\ket{\psi_{n,0}}$
which plays an important role for the inverter.

Now we consider an algorithm to invert $y\in f_n(W_n)$. Thus, we assume that
$y$ is encoded as input to the inverter in ${\sf H}_{\rm inv}$. 
Before considering the inverter,
we consider properties on the states $\ket{\varphi_{n,0}^u}$ for every $u\in\{0,1\}^{n}$:
\begin{enumerate}
\item $\| \ket{\varphi_{n,0}^u}\|^2 = 2^{-n/2}$;
\item there exists an efficient circuit ${\cal W}_n$
on ${\sf H}_{\rm inv}\otimes {\sf H}_{\rm open}\otimes {\sf H}_{\rm commit}$
which if $u$ is in ${\sf H}_{\rm inv}$, unitarily maps 
$\ket{\psi_{n,0}}$ to $2^{n/2}\ket{\varphi_{n,0}^u}$;
\item ${\cal U}_n\ket{\varphi_{n,0}^u} =
\sum_{z\in f_n^{-1}(u)}
\ket{\alpha_{1,z}}^{\rm keep}\otimes
\ket{z}^{\rm open}\otimes \ket{u}_{\times}^{\rm commit}$.
\end{enumerate}

If the above properties are true, we can consider an inverter as follows.
On input $y$, the inverter generates the state $\ket{\psi_{n,0}}$
by applying ${\cal D}_{n,0}$ to $\ket{\bf 0}$, then applies
${\cal W}_n$ and ${\cal U}_n$ in order, and finally measures ${\sf H}_{\rm open}$
to obtain $z\in f_n^{-1}(y)$.

In what follows, we show each property is true.
First, we show Property 1. We write $\ket{\psi_{n,0}}$ using
the diagonal basis for ${\sf H}_{\rm commit}$, and then we have
\begin{eqnarray*}
\ket{\psi_{n,0}} & = & \sum_{x\in W_n}\ket{\alpha_{0,x}}^{\rm keep}
	\otimes \ket{x}^{\rm open} \otimes \left(
	\sum_{u\in\{0,1\}^{n}} \frac{(-1)^{\langle u, f_n(x)\rangle}}{2^{n/2}}\ket{u}^{\rm commit}_{\times}
\right)\\
& = & 2^{-n/2} \!\!\!\!\sum_{%
\stackrel{\scriptsize \mbox{$u\in\{0,1\}^{n}$}}{x\in W_n}}
	(-1)^{\langle u, f_n(x)\rangle}
\ket{\alpha_{0,x}}^{\rm keep}
	\otimes \ket{x}^{\rm open} 
\otimes \ket{u}_{\times}^{\rm commit}.
\end{eqnarray*}
Since
\begin{eqnarray*}
\ket{\varphi_{n,0}^u} & = & 2^{-n/2} \sum_{x\in W_n}
	(-1)^{\langle u, f_n(x)\rangle}
\ket{\alpha_{0,x}}^{\rm keep}
	\otimes \ket{x}^{\rm open} 
\otimes \ket{u}_{\times}^{\rm commit},
\end{eqnarray*}
Property 1 holds. Next, we consider Property 3. Since the state
$\ket{\psi_{n,1}}$ can be written as
\[
\ket{\psi_{n,1}} = \sum_{u\in f_n(W_n)}
	\left( \sum_{z\in f_n^{-1}(u)} \ket{\alpha_{1,z}}^{\rm keep}
	\otimes \ket{z}^{\rm open} \right)
	\otimes \ket{u}_{\times}^{\rm commit},
\]
it implies that for every $u\in f_n(W_n)$
\begin{eqnarray*}
{\cal U}_n \ket{\varphi_{n,0}^u} & = &
{\cal U}_n {\cal P}_{\times}^{u,\rm commit} \ket{\psi_{n,0}}
= {\cal P}_{\times}^{u,\rm commit} {\cal U}_n \ket{\psi_{n,0}}\\
& = & {\cal P}_{\times}^{u,\rm commit} \ket{\psi_{n,1}} =
	\sum_{z\in f_n^{-1}(u)} \ket{\alpha_{1,z}}^{\rm keep}
	\otimes \ket{z}^{\rm open}
	\otimes \ket{u}_{\times}^{\rm commit}.
\end{eqnarray*}
(Note that ${\cal U}_n$ is restricted to act in 
${\sf H}_{\rm keep}\otimes {\sf H}_{\rm open}$ and thus
${\cal U}_n$ and ${\cal P}_{\times}^{u,\rm commit}$ are commutable.)
Thus, Property 3 holds. Finally, we consider Property 2.
We describe how to implement ${\cal W}_n$ 
mapping from
\[ 
\ket{u}^{\rm inv}\otimes \ket{x}^{\rm open}\otimes
					\ket{f_n(x)}^{\rm commit}_{+} \]
into
\[ 
(-1)^{\langle u, f_n(x)\rangle}
\ket{u}^{\rm inv}\otimes \ket{x}^{\rm open}\otimes
					\ket{u}^{\rm commit}_{\times} \]
for every $u\in f_n(W_n)$,
which satisfies the requirement.
First we apply the mapping
$\ket{u}^{\rm inv}\otimes\ket{f_n(x)}^{\rm commit}
\mapsto (-1)^{\langle u, f_n(x)\rangle}\ket{u}^{\rm inv}\otimes\ket{f_n(x)}^{\rm commit}$,
which can be efficiently 
implemented by using the Hadamard gate and the controlled-NOT gate.
Secondly, we apply the mapping
$\ket{x}^{\rm open}\otimes\ket{u}^{\rm commit} \mapsto 
	\ket{x}^{\rm open}\otimes\ket{u\oplus f_n(x)}^{\rm commit}$,
which can be implemented by the efficient evaluation circuit of $f_n$.
Thirdly, we apply the mapping
$\ket{y}^{\rm inv}\otimes \ket{u}^{\rm commit} \mapsto
	\ket{y}^{\rm inv}\otimes \ket{y\oplus u}^{\rm commit}$,
which can be efficiently implemented by using the controlled-NOT gate.
Finally, we apply the Hadamard gate to the all qubits in ${\sf H}_{\rm commit}$.
It is easy to verify that the above procedure satisfies the requirement.
Thus, Property 2 holds. 

\medskip
\noindent
{\bf General Case:}\\
In the general case, 
the states $\ket{\tilde{\psi}_{n,0}} = {\cal D}_{n,0}\ket{\bf 0}$ and
$\ket{\tilde{\psi}_{n,1}} = {\cal U}_n\ket{\tilde{\psi}_{n,0}}$ can 
be generally written as 
\begin{eqnarray*}
\ket{\tilde{\psi}_{n,0}} & = & \sum_{x\in \{0,1\}^n,y\in\{0,1\}^{n}}\ket{\alpha_{0,x,y}}^{\rm keep}
\otimes \ket{x}^{\rm open} \otimes \ket{y}^{\rm commit}_{+},\quad\mbox{and} \\
\ket{\tilde{\psi}_{n,1}} & = & \sum_{x\in \{0,1\}^n,y\in\{0,1\}^{n}}\ket{\alpha_{1,x,y}}^{\rm keep}
\otimes \ket{x}^{\rm open} \otimes \ket{y}^{\rm commit}_{\times},
\end{eqnarray*}
where $\sum_{x,y}\|\,\ket{\alpha_{0,x,y}}\|^2 
= \sum_{x,y}\|\,\ket{\alpha_{1,x,y}}\|^2 = 1$.

We assume that $b_0(n)+b_1(n)\ge 1 + 1/p(n)$ for
some polynomial $p$, where
\[
b_0(n)=\sum_{x\in W_n}\|\,\ket{\alpha_{0,x,f_n(x)}}\|^2\quad\mbox{and}\quad
b_1(n)=\sum_{x\in W_n}\|\,\ket{\alpha_{1,x,f_n(x)}}\|^2.
\]
Then we will show that 
the success probability $p_{\rm inv}$
for inverting the underlying quantum one-way function is greater than
$1/4(p(n))^2$.

First, the state $\ket{\tilde{\psi}_{n,0}}$ can be written as follows.
\begin{eqnarray*}
\ket{\tilde{\psi}_{n,0}} & = &
\sum_{x\in W_n} \ket{\alpha_{0,x,f_n(x)}}^{\rm keep}\otimes\ket{x}^{\rm open}
	\otimes\ket{f_n(x)}^{\rm commit}\\
&& + 
\sum_{f_n(x)\ne z~{\rm or}~x\not\in W_n} \ket{\alpha_{0,x,z}}^{\rm keep}\otimes\ket{x}^{\rm open}
	\otimes\ket{z}^{\rm commit}.
\end{eqnarray*}
Remember that the state in the perfect case can be written as
\[ \ket{\psi_{n,0}} = \sum_{x\in W_n} \ket{\alpha_{x,0}}^{\rm keep}
	\otimes \ket{x}^{\rm open} \otimes \ket{f_n(x)}^{\rm commit}. \]
Then we have
\[\ket{\alpha_{0,x}}^{\rm keep} 
	= (b_0(n))^{-1/2}\ket{\alpha_{0,x,f_n(x)}}^{\rm keep}
\quad\mbox{and}\quad
b_0(n)=\sum_{x\in W_n} \|\,\ket{\alpha_{0,x,f_n(x)}}\|^2=|\braket{\psi_{n,0}}{\tilde{\psi}_{n,0}}|^2. \]

On input $y$, the inverter generates the state $\ket{\tilde{\psi}_{n,0}}$
by applying ${\cal D}_{n,0}$ to $\ket{\bf 0}$.
We then apply in order
${\cal W}_n$ and ${\cal U}_n$ to the resulting state
and finally measures ${\sf H}_{\rm open}$ to hopefully obtain $z\in f_n^{-1}(y)$.

We have to estimate the success probability of the inverter.
To this end, we define two projections:
\begin{eqnarray*}
 {\cal P}_0 & \stackrel{\rm def}{=} & \sum_{x\in W_n} 
		{\cal P}^{x,\rm open}\otimes {\cal P}_{+}^{f_n(x),\rm commit}
\quad\mbox{and}\\
{\cal P}_1 & \stackrel{\rm def}{=} & \sum_{x\in W_n} 
		{\cal P}^{x,\rm open}\otimes {\cal P}_{\times}^{f_n(x),\rm commit}.
\end{eqnarray*}
Then 
we have $b_0(n)=\| {\cal P}_0\ket{\tilde{\psi}_{n,0}}\|^2$
and $b_1(n)=\| {\cal P}_1\ket{\tilde{\psi}_{n,1}}\|^2$. Here, we claim that
the success probability $p_{\rm inv}$ satisfies
\[ p_{\rm inv}=\| {\cal P}_1 {\cal U}_n {\cal P}_0 \ket{\tilde{\psi}_{n,0}}\|^2.\]

We will see this claim. As mentioned, 
the state is $\ket{y}^{\rm inv}\otimes\ket{\psi_{n,0}}$ with probability
$\| {\cal P}_0 \ket{\tilde{\psi}_{n,0}}\|^2=b_0(n)$, where $y$ is the input
to the inverter. As we see in the perfect case, ${\cal W}_n$
maps the state $\ket{\psi_{n,0}}$ into 
$2^{n/2}\ket{\varphi_{n,0}^y}=2^{n/2}
{\cal P}_{\times}^{y,\rm commit}\ket{\psi_{n,0}}$.
After that, we apply ${\cal U}_n$ and measure ${\sf H}_{\rm open}$.
Thus, the success probability $p_{\rm inv}(y)$ for input $y$ is written as
\begin{eqnarray*}
p_{\rm inv}(y) &= & b_0(n)2^{n}
	\left\| \left(\sum_{z\in f_n^{-1}(y)} {\cal P}^{z,\rm open}\right) {\cal P}_{\times}^{y,\rm commit}
	{\cal U}_n \ket{\psi_{n,0}}\right\|^2\\
& = &2^{n}\left\|\left(\sum_{z\in f_n^{-1}(y)}{\cal P}^{z,\rm open}\right) {\cal P}_{\times}^{y,\rm commit}
	{\cal U}_n {\cal P}_0 \ket{\tilde{\psi}_{n,0}}\right\|^2.
\end{eqnarray*}
Averaging over all value according to the output distribution of $f_n$, we have
\begin{eqnarray*}
p_{\rm inv} & = &
\sum_{y\in f_n(W_n)}\Pr[y=f_n(U_n)] p_{\rm inv}(y)\\
& = & \sum_{y\in f_n(W_n)}
	\left\| \left(\left(\sum_{z\in f_n^{-1}(y)}{\cal P}^{z,\rm open}\right)
	\otimes {\cal P}_{\times}^{y,\rm commit}\right)
	{\cal U}_n {\cal P}_0 \ket{\tilde{\psi}_{n,0}}\right\|^2\\
& = & 	\left\| \left(\sum_{y\in f_n(W_n)}
\left(\left(\sum_{z\in f_n^{-1}(y)}{\cal P}^{z,\rm open}\right)
	\otimes {\cal P}_{\times}^{y,\rm commit}\right)\right)
	{\cal U}_n {\cal P}_0 \ket{\tilde{\psi}_{n,0}}\right\|^2\\
& = & \| {\cal P}_1 {\cal U}_n {\cal P}_0 \ket{\tilde{\psi}_{n,0}}\|^2.
\end{eqnarray*}

\noindent
Furthermore, we rewrite the above to easily estimate the value of $p_{\rm inv}$.
\begin{eqnarray*}
p_{\rm inv} & = &
\|{\cal P}_1 {\cal U}_n {\cal P}_0 \ket{\tilde{\psi}_{n,0}}\|^2
= \|{\cal P}_1 {\cal U}_n ({\cal I} - {\cal P}_0^{\bot})\ket{\tilde{\psi}_{n,0}}\|^2\\
& = & \|{\cal P}_1 {\cal U}_n \ket{\tilde{\psi}_{n,0}} - 
{\cal P}_1 {\cal U}_n {\cal P}_0^{\bot}\ket{\tilde{\psi}_{n,0}}\|^2\\
& = & \|{\cal P}_1 \ket{\tilde{\psi}_{n,1}} - 
	{\cal P}_1 {\cal U}_n {\cal P}_0^{\bot}\ket{\tilde{\psi}_{n,0}}\|^2.
\end{eqnarray*}
Using the triangle inequality and $b_1(n)>1-b_0(n)$, we have
\begin{eqnarray*}
p_{\rm inv} & \ge &
\left( \| {\cal P}_1\ket{\tilde{\psi}_{n,1}}\|
 - \| {\cal P}_1 {\cal U}_n {\cal P}_0^{\bot}\ket{\tilde{\psi}_{n,0}}\|\right)^2\\
& \ge & \left( \| {\cal P}_1\ket{\tilde{\psi}_{n,1}}\|
 - \| {\cal P}_0^{\bot}\ket{\tilde{\psi}_{n,0}}\|\right)^2\\
& = &  \left(\sqrt{b_1(n)} - \sqrt{1-b_0(n)}\right)^2. 
\end{eqnarray*}

Let us recall that we assume that $b_0(n)+b_1(n)>1+ 1/p(n)$ for some polynomial $p$.
After some calculation, we have 
\[ p_{\rm inv}\ge 2- 1/p - 2\sqrt{1-1/p} \ge 1/4(p(n))^2. \]
This 
completes the proof of Theorem \ref{lem:L-bind}.
\end{proof}

\section{1-out-of-2 binding commitment from quantum one-way function}
A 1-out-of-2 binding (we denote by ${2\choose 1}$-binding) commitment scheme 
consists of two commitment schemes where one of the two commitment schemes 
satisfies the binding property and the other does not have to satisfy
the binding property.
In \cite{HNORV09}, Haitner {\sl et al.\/} introduced a notion of 1-out-of-2 binding
commitment schemes and gave a construction of ${2\choose 1}$-binding commitment
schemes based on one-way function. We also
consider a quantum version of ${2\choose 1}$-binding commitment scheme and 
construct a ${2\choose 1}$-binding quantum commitment scheme.

We define a {\em 2-parallel quantum bit commitment scheme} $\Pi=(\Pi_1,\Pi_2)$, which is 
a parallel composition of two non-interactive quantum bit commitment
schemes $\Pi_1$ and $\Pi_2$. 
At the beginning of the protocol $\Pi$, Alice has two bits $w_1$ and $w_2$.
$\Pi$ consists of two phases, Commit Phase and Reveal Phase, as the standard
bit commitment schemes do. In Commit Phase, Alice (in $\Pi$) invokes
Commit Phase of $\Pi_1$ and sends $w_1$-commitment state (of $\Pi_1$) to Bob. 
Also she invokes Commit Phase of $\Pi_2$ and sends $w_2$-commitment state (of $\Pi_2$)
to Bob. We call the joint state of the 
$w_1$-commitment state (of $\Pi_1$) and the $w_2$-commitment state (of $\Pi_2$)
{\em$(w_1,w_2)$-commitment state\/} (of $\Pi$).
In Reveal phase, Alice sends decommitments both of $\Pi_1$ and $\Pi_2$.
Bob accepts if the both decommitments are valid.

Next, we would like to define ``computational 1-out-of-2 binding''. 
In the classical case, it is defined in terms of transcripts. In the quantum case,
the definition based on transcripts is not easy to handle with. Fortunately,
our protocol below has a classical inner-state which controls the 1-out-of-2
binding property. Thus, after providing our protocol, we will give a
protocol-specific definition of computational 1-out-of-2 binding. 
Moreover, we discuss the hiding property later.



We give
our 2-parallel quantum bit commitment protocol (called {\sf Protocol 1})
in Figure \ref{fig:p1}.
While a sequential composition is discussed in \cite{HNORV09}, 
our protocol runs {\sf Base Protocol} twice in parallel.

\begin{figure}[htbp]
\hrule
\medskip
\noindent
{\bf Parameters}: Integers $t\in [1,n]$, 
$\Delta_1\in [0,t]$ and $\Delta_2\in [0,n-t]$.\\[.5em]
{\bf Commit Phase}:
\begin{enumerate}\parsep=0pt\itemsep=0pt
\item Alice with her two bits $w_1$ and $w_2$
first chooses $x\in \{0,1\}^n$ uniformly and computes $y=f_n(x)$.
She also randomly chooses two hash functions $h_1$ and $h_2$ from families of
pairwise independent hash functions $H^{(1)}=\{ h_1: \{0,1\}^n\rightarrow
\{0,1\}^{t-\Delta_1}\}$ and $H^{(2)}=\{ h_2: \{0,1\}^n\rightarrow
\{0,1\}^{n-t-\Delta_2}\}$, respectively.
\item Next, Alice sends the quantum state 
\[ \ket{h_1,h_1(y)}_{\theta(w_1)} \otimes \ket{h_2,h_2(x)}_{\theta(w_2)}\in 
{\sf H}_{{\rm commit}_1}\otimes {\sf H}_{{\rm commit}_2} \]
to Bob.
\item 
Bob then stores the received quantum state $\rho_B$ until the reveal phase.
\end{enumerate}
{\bf Reveal Phase}:
\begin{enumerate}\parsep=0pt\itemsep=0pt
\item Alice announces the first decommitment $(w_1,h_1,y)$ 
and the second decommitment $(w_2,h_2,x)$ to Bob.
\item Next, Bob measures the first register of $\rho_B$ with measurement
$\{P^{h,z}_{\theta(w_1)}\}_{h\in H^{(1)},z\in {\it range}(h_1)}$ and 
obtains the classical output $(h,z)\in H^{(1)}\times {\it range}(h_1)$. 
Also he simultaneously measures the second register with measurement
$\{P^{h',z'}_{\theta(w_2)}\}_{h'\in H^{(2)},z'\in {\it range}(h_2)}$ and 
obtains the classical output $(h',z')\in H^{(2)}\times {\it range}(h_2)$. 
\item Lastly, Bob accepts the first commitment if and only if $h(y)=z$.
Also he accepts the second commitment if and only if $h'(z')=x$ and $y=f_n(x)$.
\end{enumerate}

\hrule
\caption{{\sf Protocol 1}}\label{fig:p1}
\end{figure}

\begin{definition}\rm
{\sf Protocol 1}
is
{\em computationally 1-out-of-2-binding\/}
if there exists a set $S\subseteq \{0,1\}^n$ such that
for every function $\varepsilon(n)=1/{\rm poly}(n)$,
the first half of the 2-parallel quantum bit commitment
is $\varepsilon(n)$-computationally-binding on condition that
a randomly chosen $x$ falls into $S$ and
the second half is $\varepsilon(n)$-statistically-binding
on condition that $x$ does not fall into $S$.
\end{definition}

Next, we define the hiding property. Unfortunately, the hiding property
of {\sf Protocol 1} is not so strong. This is because the preimage-size
of $f$ is not constant over the inputs. Thus, we consider the following
weak definition of the binding property.


\begin{definition}\rm
If, for any $\gamma$ with $0\le \gamma\le 1$, there exists a subset
$\Gamma \subseteq\{0,1\}^n$ satisfying the following
two properties, then  2-parallel quantum bit commitment is
$\gamma$-{\em hiding}.
\begin{enumerate}
\parsep=0pt\itemsep=0pt
\item $|\Gamma| \ge  \gamma \cdot 2^n$.
\item Let $w_1,w_2\in\{0,1\}$.
Let $Z_1(w_1)$ be a random variable for the first half
of the commitment when $w_1$ is the first bit to be committed
and $Z_2(w_2)$ be a random variable for the second half
when $w_2$ is the second bit to be committed, on condition that
$x$ is uniformly chosen from $\Gamma$.
Namely, the value for $(Z_1,Z_2)$ takes $\ket{h,h(f_n(x))}_{\theta(w_1)}\otimes
\ket{h',h'(x)}_{\theta(w_2)}$ where
$h$ is uniformly chosen from $H^{(1)}$, $h'$ is uniformly chosen from $H^{(2)}$ 
and $x$ is uniformly chosen from $\Gamma$.
Then, $(Z_1(0),Z_2(0))$, $(Z_1(0),Z_2(1))$, $(Z_1(1),Z_2(0))$ and
$(Z_1(1),Z_2(1))$ are negligibly close to each other.
\end{enumerate}
\end{definition}

\begin{theorem}\label{thm:2para}
Let $f=\{f_n:\{0,1\}^n \rightarrow \{0,1\}^{n}\}_{n\in\nat}$ be
an $s(n)$-secure regular quantum one-way function family, where $s(n)=n^{\omega (1)}$.
Then {\sf Protocol 1} with setting of parameters 
$\Delta_1=\Delta_2=\frac{1}{4}\log s(n)$, is a 2-parallel quantum bit commitment scheme
that is 
computationally 1-out-of-2 binding, regardless of the setting~of~$t$.
\end{theorem}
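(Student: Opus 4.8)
The plan is to exhibit the set $S \subseteq \{0,1\}^n$ required by the definition of computational 1-out-of-2 binding, and to do so in a way that works for every value of $t$. The natural choice is to let $S$ be the set of inputs $x$ whose image $f_n(x)$ is ``heavy enough'' for the first commitment to bind: because $f$ is $r(n)$-regular, every $y \in \mathrm{supp}(f_n(U_n))$ has exactly $2^{r(n)}$ preimages, so the image set has size $2^{n-r(n)}$. Roughly speaking, I want $S$ to capture the regime where the hashed image $h_1(y)$ still has enough residual entropy (relative to the min-entropy $n-r(n)$ of $f_n(U_n)$) that inverting $f_n$ on the hashed relation is infeasible, and the complement $\bar S$ to capture the regime where the hashed preimage $h_2(x)$ pins down $x$ almost uniquely, so the second commitment is statistically binding. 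The parameter split $t$ together with $\Delta_1 = \Delta_2 = \tfrac14\log s(n)$ is exactly calibrated so that, for any choice of $t$, at least one of these two regimes applies. I would therefore argue by cases on the relation between $t$ and $r(n)$ (the unknown regularity parameter), showing that in one case $S = \{0,1\}^n$ works and in the other $S = \emptyset$ works, or more precisely that the threshold defining $S$ lands so that the two sub-claims partition correctly.

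The first sub-claim (binding of the first half when $x \in S$) is where the Non-interactive Quantum Hashing Theorem (Theorem \ref{lem:L-bind}) does the work. The first commitment, in isolation, is \textsf{Base Protocol} applied not to $f_n$ directly but to the map $x \mapsto h_1(f_n(x))$ — equivalently, to the relation $R'_n = \{(h_1(f_n(x)), x) : x \in S\}$ in the language of Theorem \ref{lem:L-bind}, with $W_n = S$. So I would feed an adversary that breaks $\varepsilon(n)$-binding of the first half (conditioned on $x \in S$) into Theorem \ref{lem:L-bind}: it yields an algorithm that, on a proportionally-sampled $y'' \in f_n(S)$-hashed value, produces a preimage with probability $\Omega(s(n)^{?})$. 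The remaining step is to turn this hashed-relation inverter into a genuine inverter of $f_n$: here is where the Leftover Hash Lemma (Lemma \ref{lem:lhl}) enters, since $h_1$ is pairwise independent and $f_n(U_n)$ has min-entropy $n - r(n)$; the hash output of length $t - \Delta_1$ with $\Delta_1 = \tfrac14\log s(n)$ is chosen precisely so that $(h_1, h_1(f_n(U_n)))$ is $\mathrm{negl}$-close to uniform when $t \le n - r(n)$ (the relevant case), which lets me simulate the input distribution to the hashed inverter while retaining enough correlation to read off an honest $f_n$-preimage. Composing, a poly-size $\varepsilon(n)$-adversary yields a poly-size $f_n$-inverter of success $1/\mathrm{poly}(n) \gg 1/s(n)$, contradicting $s(n)$-security (recall $s(n) = n^{\omega(1)}$).

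The second sub-claim (statistical binding of the second half when $x \notin S$) should be the easier half and purely information-theoretic: conditioned on $x \notin S$, the fiber $f_n^{-1}(f_n(x))$ has size $2^{r(n)}$, and the second commitment reveals $(h_2, h_2(x))$ with $h_2$ mapping into $n - t - \Delta_2$ bits; in the complementary case ($t > n - r(n)$, say) this output length is smaller than $r(n) - \Theta(\log s(n))$, so $h_2$ is highly compressing on each fiber and a counting/pairwise-independence argument (a union bound over pairs $x \ne x'$ in the same fiber with the same hash value) shows that, except with negligible probability over $h_2$, no two distinct valid decommitments $(w_2, h_2, x)$ and $(w_2', h_2, x')$ with $f_n(x) = f_n(x')$ both pass Bob's check — which is exactly $b_0 + b_1 - 1 \le \mathrm{negl}$ for the second half. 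I expect the main obstacle to be neither sub-claim individually but the bookkeeping that makes them \emph{jointly exhaustive over all $t$} with a single fixed $\Delta_1 = \Delta_2 = \tfrac14\log s(n)$: one must check that the loss $\tfrac14\log s(n)$ in the Leftover Hash Lemma application, the loss in Theorem \ref{lem:L-bind} (which costs a $\sqrt{s(n)}$ in the binding gap and an $s(n)$ in inversion probability), and the slack needed for the counting bound in the second half all fit inside the same budget, so that for every $t \in [1,n]$ at least one regime is clean. The subtle point flagged in the paper — that the adversary sees \emph{both} commitments simultaneously and they may be entangled — is handled automatically here, because Theorem \ref{lem:L-bind}'s adversary model already allows arbitrary auxiliary quantum state in $\mathsf{H}_{\mathrm{keep}}$, so the second-half register can be absorbed into the ancilla of the first-half attack with no change to the argument; I would remark on this explicitly rather than re-prove it.
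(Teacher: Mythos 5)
Your overall architecture (exhibit the heavy set $S$, invoke Theorem~\ref{lem:L-bind} with $W_n=S$ for the first half, argue the second half information-theoretically, absorb the other commitment register into ${\sf H}_{\rm keep}$) is the paper's, but you have the two regimes \emph{swapped}, and with the swap both of your sub-claims become false. For an $r(n)$-regular $f$, every image has weight $2^{r(n)-n}$, so the ``heavy'' set (the paper's $S_t=\{x:\Pr[f_n(U_n)=f_n(x)]\ge 2^{-t-\Delta_3}\}$ with $\Delta_3=\Theta(\log s(n))$) is all of $\{0,1\}^n$ precisely when $t\ge n-r(n)-\Delta_3$, i.e.\ when the first hash output length $t-\Delta_1$ is at least roughly the image entropy $n-r(n)$ --- exactly the regime where Lemma~\ref{lem:lhl} does \emph{not} apply to $(h_1,h_1(f_n(U_n)))$. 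In your regime $t\le n-r(n)$ the first half cannot be proved binding from the one-wayness of $f$, and for small $t-\Delta_1$ it simply is not binding: $h_1\circ f_n$ is compressing, so an opening for the other bit is accepted with probability $2^{-(t-\Delta_1)}$ even by a trivial attack, and in any case the inverter you get from Theorem~\ref{lem:L-bind} returns $x''$ with $h_1(f_n(x''))=h_1(y)$, which need not satisfy $f_n(x'')=y$; closeness of $(h_1,h_1(f_n(U_n)))$ to uniform does nothing to ``retain enough correlation to read off an honest $f_n$-preimage.'' The paper closes this gap not with the Leftover Hash Lemma but with a counting argument (the set $T$ of good pairs) to move the Theorem~\ref{lem:L-bind} inverter to uniform hash values, followed by the heaviness bound $\Pr[f_n(U_n)=f_n(x'')]\ge 2^{-t-\Delta_3}$, which over the $2^{t-\Delta_1}$ hash values costs only $2^{-(\Delta_1+\Delta_3)}=s(n)^{-3/4}$ --- and this accounting is exactly what fails outside the heavy regime.

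Your second sub-claim is inverted as well, and its mechanism is internally inconsistent: if $h_2$ is ``highly compressing on each fiber'' (your case $t>n-r(n)$, output length $n-t-\Delta_2<r(n)$), then by pigeonhole there \emph{are} many pairs $x\ne x'$ in the same fiber with $h_2(x)=h_2(x')$, and indeed the set $W_y$ of valid openings occupies a non-negligible fraction $\xi$ of the range, so an (unbounded) adversary committing the optimal superposition achieves $b_0(n)+b_1(n)\approx 1+\xi$, refuting statistical binding. Statistical binding of the second half holds in the opposite regime, $t\lesssim n-r(n)-\Theta(\log s(n))$, where the fiber of size $2^{r(n)}$ is tiny compared with the range $2^{n-t-\Delta_2}$; and even there the right bound is not a classical ``no two colliding decommitments'' union bound (which does not control the quantum quantity $b_0+b_1-1$, since the cheater can send diagonal-basis superpositions and quantum opening registers) but the superposition analysis giving $b(n)\le 1+\sqrt{\xi}$ with $\xi\le 2^{\Delta_2-\Delta_3}$, as the paper carries out. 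Once the regimes are corrected --- heavy ($t\ge n-r(n)-\Delta_3$) $\Rightarrow$ first half computationally binding via the counting/heaviness reduction, light $\Rightarrow$ second half statistically binding via the quantum bound --- the dichotomy is exhaustive for every $t$ by construction of $S_t$, and no case analysis on $t$ versus $r(n)$ (nor the Leftover Hash Lemma, which belongs to the hiding proof) is needed.
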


While {\sf HNORV} scheme is two sequential of commitment schemes,
ours is two parallel of quantum commitment schemes. Thus, we
have to take into the account that the second half of the commitment might 
increase the power of the adversary. Fortunately, such information can be
included in the adversary's private space ${\sf H}_{\rm keep}$ and we
can use the same reduction as in {\sf Base Protocol}.
Actually, this observation
plays an important role through the paper.
Moreover, there is another difficulty in the analysis. Since the computational property
of ${2\choose 1}$-binding commitment is conditional (i.e., $x\in S$), we have
to consider the reduction between two algorithms
whose input distributions are different. To overcome the difficulty,
we use Non-interactive Quantum Hashing Theorem. 
While the proof of the computational part is similar to the proof in \cite{HNORV09},
the proof of the statistical part is completely different from the proof in \cite{HNORV09}
because it involves the analysis of quantum states.

\begin{proof}
For every $t\in [1,n]$, we define the set of ``heavy'' strings to be
\[S_t = \{ x\in f_n^{-1}(y) :  \Pr[f_n(U_n)=y]\ge 2^{-t-\Delta_3}\} \] 
for the parameter $\Delta_3=\frac{1}{2}s(n)$.

We will show that if $x\in S_t$ is chosen in the first step of Commit Phase
then the first half is binding and if $x\not\in S_t$ then the second half is binding.

First, we show a reduction from inverting $f_n$ to violating the binding property
of {\sf Protocol 1} in the case of $x\in S_t$. 
Let $f_n':H^{(1)}\times \{0,1\}^n \rightarrow H^{(1)}\times \{0,1\}^{t-\Delta_1}$ 
be a function that maps $(h,x)$ to $(h,h(f_n(x)))$.
We define $R'_n = \{ (f'_n(h,x), (h,x) ) : x\in S_t~\mbox{and}~h\in H^{(1)}\}$
and $W_{h,\eta}=\{\udn{x} : (\eta, (h,x))\in R_n'\}$.

Let ${\cal A}_1$ be a quantum algorithm to violate the binding property
(with respect to $R_n'$) of {\sf Protocol 1}
with probability $\varepsilon(n)$.
Then, from Theorem \ref{lem:L-bind},
we have another algorithm ${\cal A}_2$ that inverts
$f_n'(h,x)$. Namely,
\[ \Pr[{\cal A}_2(H^{(1)},H^{(1)}(f_n(U_n)))\in W_{H^{(1)},H^{(1)}(f_n(U_n))}]
\ge \varepsilon(n)^2/4. \]
For each $h\in H^{(1)}$ and $x\in \{0,1\}^n$, we consider
\[ p_{h,x} = 
\frac{\Pr[ {\cal A}_2(f_n'(h,x)) \in W_{h,h(f_n(x))}]}{|f_n'^{-1}(f_n'(h,x))|} \]
and set
\[ T = \{ (h,x) : p_{h,x}\ge \frac{\varepsilon(n)^2}{8}\}.\]
By the counting argument, we have $|T|\ge \varepsilon^2(n)/8\cdot 2^n|H^{(1)}|$.
Here, we estimate the following probability:
\begin{eqnarray*}
\lefteqn{\Pr[{\cal A}_2(H^{(1)}, U_{t-\Delta_1})\in W_{H^{(1)},U_{t-\Delta_1}}]}\\
& \ge & \sum_{(h,x)\in T}
\Pr[{\cal A}_2(h, h(f_n(x)))\in W_{h,h(f_n(x))}]
\cdot \Pr[H^{(1)}=h \land U_{t-\Delta_1}=h(f_n(x))]\\
&\ge& \varepsilon^4(n)/64.
\end{eqnarray*}


We consider an algorithm $\cal B$ that on input $y=f_n(x)$, 
picks randomly a hash function $h\in H^{(1)}$,
and outputs ${\cal A}_2(h,h(y))$. 
We analysis the probability that $\cal B$ inverts $f_n$ in the following.
\begin{eqnarray*}
\lefteqn{\Pr[{\cal B}(f_n(U_n))\in f_n^{-1}(f_n(U_n))]}\\
& = &  {\bf E}_{h\leftarrow {H^{(1)}}}[\Pr[{\cal A}_2(h,h(f_n(U_n)))\in f^{-1}(f_n(U_n))]]\\
& = &  {\bf E}_{h\leftarrow {H^{(1)}}}\left[\sum_{\udn{x}}
\Pr[f_n(U_n)=f_n(x) \land  {\cal A}_2(h,h(f(x_n)))=x]\right]\\
& = &  {\bf E}_{h\leftarrow {H^{(1)}}}\left[\sum_{\eta,x~\mbox{s.t.}~\eta=h(f_n(x))}
\Pr[f_n(U_n)=f_n(x)]\cdot\Pr[{\cal A}_2(h,\eta)=x]\right]\\
& \ge &  {\bf E}_{h\leftarrow {H^{(1)}}}\left[\sum_{\eta,x~\mbox{s.t.}~x\in W_{h,\eta}}
\Pr[f_n(U_n)=f_n(x)]\cdot\Pr[{\cal A}_2(h,\eta)=x]\right]\\
& \ge &  2^{-t-\Delta_3}\cdot
{\bf E}_{h\leftarrow {H^{(1)}}}\left[\sum_{\eta,x~\mbox{s.t.}~x\in W_{h,\eta}} \Pr[{\cal A}_2(h,\eta)=x]\right]\\
& = &  2^{-t-\Delta_3} \cdot 2^{t-\Delta_1}\cdot \Pr[{\cal A}_2(H^{(1)},U_{t-\Delta_1})\in W_{H^{(1)},U_{t-\Delta_1}}]\\
& \ge & 2^{-(\Delta_1+\Delta_3)} \cdot \frac{\varepsilon(n)^4}{64}\\
& = & s(n)^{-3/4}\cdot \frac{\varepsilon(n)^4}{64},
\end{eqnarray*}
which is greater than $1/s(n)$ if $\varepsilon$ is non-negligible.

Next, we consider the case $x\not\in S_t$.
We define $W_y=\{(h,h(x)):h\in H^{(2)}~\mbox{and}~x\in f_n^{-1}(y)\}\subseteq \{0,1\}^q$,
where $q$ is the length of $(h,h(x))$.
Any (possibly cheating) quantum state $\ket{\psi}$ for the second commitment 
can be written as follows:
\[ \ket{\psi} = \sum_{z\in W_y} \alpha_z\ket{z}_{+} + \sum_{z\not\in W_y} \alpha_z\ket{z}_{+}, \]
since $\{\ket{z}_{+}\}_{z\in\{0,1\}^q}$ is a basis.
Then $b_0(n)=\sum_{z\in W_y} |\alpha_z|^2$,
Since $\ket{\psi}$ can be written as
\[ \ket{\psi} = \sum_{u\in W_y} \sum_{z\in \{0,1\}^q} \alpha_z (-1)^{\langle u,z\rangle}
\ket{u}_{\times} + 
\sum_{u\not \in W_y} \sum_{z\in \{0,1\}^q} \alpha_z (-1)^{\langle u,z\rangle} 
\ket{u}_{\times}, \]
\[b_1(n)=\frac{\displaystyle \sum_{u\in W_y}\left| \sum_{z\in \{0,1\}^q} \alpha_z (-1)^{\langle u,z\rangle}
\right|^2}{2^q}. \]
To maximize $b_0(n)+b_1(n)$, we set $a=\sum_{z\in W_y} |\alpha_z|^2=b_0(n)$.
On the condition that $b_0(n)=a$, $b_1(n)$ achieves the maximum when $|\alpha_z|$
is uniformly distributed. Actually, it is sufficient to consider the case where
\[ \alpha_z = \left\{ \begin{array}{@{}ll}\sqrt{a/|W_y|} & \mbox{\rm if}~z\in W_y\\
		\sqrt{(1-a)/(2^q-|W_y|)} & \mbox{otherwise.}
	\end{array}\right.\]

Then, we have $b_1(n)= (\sqrt{a|W_y|} + \sqrt{(1-a)(2^q-|W_y|)})^2/2^q$.
Let $\xi = |W_y|/2^q$. Thus, we have
\[ b(n)= 1 + (2a-1)\xi + 2\sqrt{a(1-a)\xi(1-\xi)}. \]
After some calculation, we have $b(n)\le 1 + \sqrt{\xi}$.
Since
\[ \xi \le \frac{|\{x:f_n^{-1}(y)\}|}{2^{n-t-\Delta_2}}\le 
	\frac{2^{n-t-\Delta_3}}{2^{n-t-\Delta_2}} = 2^{\Delta_2-\Delta_3} 
	=\frac{1}{\sqrt[4]{s(n)}}=\frac{1}{n^{\omega(1)}}, \]
we can say that $b(n)\le 1 + 1/n^{\omega(1)}$.
\end{proof}

\begin{theorem}\label{thm:fowfhide}
Let $f=\{f_n:\{0,1\}^n \rightarrow \{0,1\}^{n}\}_{n\in\nat}$ be
an $s(n)$-secure quantum one-way function family, where $s(n)=n^{\omega(1)}$.
Then, there exists $t=t_0\in[1,n]$ such that
{\sf Protocol~1} satisfies $(1/n)$-hiding if we set
$\Delta_1=\Delta_2= \frac{1}{4}\log s(n)$.
\end{theorem}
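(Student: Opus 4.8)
The plan is to exhibit, for each value of the smoothing parameter $\gamma$, a suitable subset $\Gamma\subseteq\{0,1\}^n$ and then verify the two conditions in the definition of $\gamma$-hiding: a size bound $|\Gamma|\ge\gamma\cdot 2^n$, and the statistical-indistinguishability of the four joint commitment states $(Z_1(w_1),Z_2(w_2))$. The natural choice is to take $\Gamma$ to consist of inputs $x$ whose ``regularity'' is controlled in both coordinates simultaneously: the preimage size $|f_n^{-1}(f_n(x))|$ should be large enough that $h_1(f_n(x))$ is close to uniform on $\{0,1\}^{t-\Delta_1}$ (via Lemma~\ref{lem:lhl}), while at the same time $x$ itself, being drawn from a set of min-entropy at least roughly $n-t+\Delta_2$ worth of ``mass'', makes $h_2(x)$ close to uniform on $\{0,1\}^{n-t-\Delta_2}$. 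Concretely I would partition $\{0,1\}^n$ by the value of $\lfloor\log|f_n^{-1}(f_n(x))|\rfloor$, which takes at most $n+1$ values; by pigeonhole one of these classes, call it $\Gamma$, has $|\Gamma|\ge 2^n/(n+1)\ge (1/n)\cdot 2^n$ for large $n$, which is why $t_0$ (the $t$ for which we can certify hiding) and the bound $1/n$ appear. On $\Gamma$ the preimage size is essentially constant, say $2^r$, so $f_n$ restricted to $f_n(\Gamma)$ behaves like an $r$-regular function and both hash outputs have enough entropy to invoke the Leftover Hash Lemma with $\varepsilon$ negligible.

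Once $\Gamma$ is fixed, the core of the argument is the following chain. Since $\Gamma$ is a class of near-constant preimage size $2^r$, the distribution of $f_n(x)$ for $x\leftarrow\Gamma$ has min-entropy roughly $\log|f_n(\Gamma)|\ge n - r - \log(n+1)$, which exceeds $t-\Delta_1+2\log(\varepsilon^{-1})$ for a negligible $\varepsilon$ thanks to $\Delta_1=\frac14\log s(n)=n^{\omega(1)}$-sized slack (more precisely $\Delta_1$ grows, $s(n)=n^{\omega(1)}$, so $2^{-\Delta_1}$ dominates $1/{\rm poly}$); hence $\delta((H^{(1)},H^{(1)}(f_n(x))),(H^{(1)},U_{t-\Delta_1}))$ is negligible. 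Similarly the uniform distribution on $\Gamma$ has min-entropy $\log|\Gamma|\ge n-\log(n+1)\ge n-t-\Delta_2+2\log(\varepsilon^{-1})$ using the $\Delta_2$-slack, so $\delta((H^{(2)},H^{(2)}(x)),(H^{(2)},U_{n-t-\Delta_2}))$ is negligible as well. The key extra observation — this is exactly the ``parallel composition'' point emphasized in the introduction — is that the pair $(h_1,h_1(f_n(x)))$ and the pair $(h_2,h_2(x))$ are \emph{jointly} close to $(H^{(1)},U_{t-\Delta_1},H^{(2)},U_{n-t-\Delta_2})$, i.e. close to a product of uniforms: one applies the Leftover Hash Lemma once to the combined source $(f_n(x),x)$ against the product hash family, or equivalently argues by a two-step hybrid that first replaces $h_2(x)$ by uniform (conditioned on $f_n(x)$, the residual min-entropy of $x$ is still $\approx r$, and $r\ge n-t-\Delta_2+2\log\varepsilon^{-1}$ would need care — actually since $h_1(f_n(x))$ already reveals little, it is cleaner to first replace the $h_1$-block then the $h_2$-block). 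Either way, the upshot is that for \emph{every} fixed $(w_1,w_2)$ the classical random variable describing the measurement-basis labels that determine $Z_1(w_1)\otimes Z_2(w_2)$ is negligibly close to the fixed distribution $(H^{(1)},U,H^{(2)},U)$ that does not depend on $(w_1,w_2)$.

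Finally I would pass from this closeness of the classical description to closeness of the quantum states: the state $(Z_1(w_1),Z_2(w_2))$ is, by construction, $\mathbf{E}_{h_1,h_2,x\leftarrow\Gamma}\bigl[\kb{h_1,h_1(f_n(x))}{\cdot}_{\theta(w_1)}\otimes\kb{h_2,h_2(x)}{\cdot}_{\theta(w_2)}\bigr]$, which is a (classically-controlled) fixed CPTP image of the classical random variable above; since the trace distance does not increase under such a channel, $\delta\bigl((Z_1(w_1),Z_2(w_2)),\,\sigma_{w_1,w_2}\bigr)$ is negligible, where $\sigma_{w_1,w_2}$ is the state obtained by feeding the \emph{uniform} distribution $(H^{(1)},U_{t-\Delta_1},H^{(2)},U_{n-t-\Delta_2})$ into the same channel. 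But with uniform inputs the two registers are independent and each is the maximally-mixed-over-a-random-basis state $\mathbf{E}_h\,\mathbf{E}_{z\leftarrow U}\kb{h,z}{h,z}_{\theta(w_i)}$, whose reduced state on the commitment register is $I/\dim$ regardless of $w_i$ (this is the same perfect-hiding computation underlying Proposition~\ref{prop:hide}); hence $\sigma_{w_1,w_2}=\sigma$ is \emph{independent} of $(w_1,w_2)$. Combining, all four states $(Z_1(0),Z_2(0))$, $(Z_1(0),Z_2(1))$, $(Z_1(1),Z_2(0))$, $(Z_1(1),Z_2(1))$ lie within negligible trace distance of the common $\sigma$, and the triangle inequality gives that they are negligibly close to each other, which is condition 2. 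Together with $|\Gamma|\ge(1/n)2^n$ this establishes $(1/n)$-hiding for $t=t_0$.

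The step I expect to be the main obstacle is making the \emph{joint} Leftover-Hash argument fully rigorous: one must check that after revealing $h_1$ and $h_1(f_n(x))$ the conditional min-entropy of $x$ (over the randomness of $x\leftarrow\Gamma$) is still large enough — essentially $\approx r$, the log of the near-constant fibre size — to smooth $h_2(x)$ to uniform, and symmetrically that revealing $h_2,h_2(x)$ leaves $f_n(x)$ with enough min-entropy. On $\Gamma$ these residual entropies are exactly what the $\Delta_1,\Delta_2$ slacks are tuned for (each $\Delta_i=\frac14\log s(n)$ gives $2\log\varepsilon^{-1}$ room with $\varepsilon$ still negligible since $s(n)=n^{\omega(1)}$), but bookkeeping the hybrid so that the error stays negligible across both replacements, and confirming that the pigeonhole class $\Gamma$ does deliver near-constant fibre size rather than merely a bounded \emph{range} of fibre sizes, is where the care is needed. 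Everything else — the size bound, the channel-monotonicity of trace distance, and the uniform-input perfect-hiding computation — is routine given the earlier lemmas.
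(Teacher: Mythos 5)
Your proposal follows essentially the same route as the paper: pigeonhole over dyadic preimage-size classes defines $\Gamma$ and $t_0$ with $\mu(\Gamma)\ge 1/n$, the Leftover Hash Lemma with the $\Delta_1,\Delta_2$ slack handles each half, and the four states are compared through a common, basis-independent uniform state. The joint-closeness step you flag as the main obstacle is resolved in the paper exactly by the conditioning you mention (on the full value $y=f_n(x)$, under which the two halves are independent and the second retains min-entropy about $n-t_0$), combined via Lemma \ref{lem:grs} and the triangle inequality --- rather than by your alternative order of first replacing the $h_1$-block, whose conditional min-entropy bookkeeping after leaking $h_2(x)$ would indeed be delicate.
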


First, we suppose that $f$ is a regular quantum one-way function.
Then the preimage-size is always constant. This means that ${\bf H}_2(f(U_n))$ is 
also constant. If the parameter $t$ is correctly given (i.e., 
$t={\bf H}_2(f(U_n))$), the first and the second commitment states are
almost maximally mixed. Though there is a small amount of correlation
between the first and the second commitment states, we can regard
the joint state as a quantum state close to the maximally mixed state
by the following lemma, which is a quantum version of Lemma 2.5 in \cite{GRS06}.

\begin{lemma}\label{lem:grs}
Let $\rho$ be a mixed state such that $\rho=\sum_x p_x\ket{x}\bra{x}\otimes \rho_x$.
If there exists a mixed state $\sigma'$ such that $\delta(\rho_x,\sigma')\le \varepsilon$
for all $x$, then $\delta(\rho,\sigma)\le \varepsilon$, where
$\sigma=\sum_x p_x\ket{x}\bra{x}\otimes \sigma'$.
\end{lemma}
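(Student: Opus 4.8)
The plan is to exploit the classical-quantum block structure of both states. Since $\{\ket{x}\}$ is an orthonormal family on the first register,
\[ \rho - \sigma \;=\; \sum_x p_x\,\ket{x}\bra{x}\otimes(\rho_x-\sigma') \]
is block-diagonal: the summands act on mutually orthogonal subspaces, indexed by $x$, so $\rho-\sigma$ is, up to unitary equivalence, the direct sum of the Hermitian operators $p_x(\rho_x-\sigma')$ living on those blocks. Consequently $|\rho-\sigma|$ has blocks $p_x\,|\rho_x-\sigma'|$, whence ${\rm tr}\,|\rho-\sigma| = \sum_x p_x\,{\rm tr}\,|\rho_x-\sigma'|$. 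Using $\delta(\cdot,\cdot)=\frac{1}{2}{\rm tr}\,|\cdot-\cdot|$ this gives $\delta(\rho,\sigma) = \sum_x p_x\,\delta(\rho_x,\sigma')$, and the hypothesis $\delta(\rho_x,\sigma')\le\varepsilon$ together with $\sum_x p_x=1$ yields $\delta(\rho,\sigma)\le\varepsilon$.

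An equivalent route, which avoids citing block-additivity of the trace norm and uses only the variational characterization $\delta(\tau_1,\tau_2)=\max_{0\le P\le {\cal I}}{\rm tr}(P(\tau_1-\tau_2))$, runs as follows. Let $P$ attain the maximum for the pair $(\rho,\sigma)$, and set $P_x=(\bra{x}\otimes {\cal I})\,P\,(\ket{x}\otimes {\cal I})$, the compression of $P$ to the second register, so that $0\le P_x\le {\cal I}$. A short computation using cyclicity of the trace and the orthogonality of the projections $\ket{x}\bra{x}$ shows ${\rm tr}(P(\rho-\sigma))=\sum_x p_x\,{\rm tr}(P_x(\rho_x-\sigma'))$. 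Each term is at most $p_x\,\delta(\rho_x,\sigma')\le p_x\varepsilon$, so summing over $x$ gives ${\rm tr}(P(\rho-\sigma))\le\varepsilon$, and hence $\delta(\rho,\sigma)\le\varepsilon$.

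There is no genuine obstacle here: this is a standard fact about trace distance of classical-quantum states, and the lemma is the direct quantum analogue of Lemma 2.5 of \cite{GRS06}, included only to make precise the step in the proof of Theorem \ref{thm:fowfhide} that upgrades ``every conditional commitment state $\rho_x$ is negligibly close to a single $\sigma'$'' to ``the joint commitment state is negligibly close to $\sigma$''. The one point deserving a sentence of justification is that in the second approach one may replace the optimal $P$ by its dephasing $\sum_x \ket{x}\bra{x}\otimes P_x$ without changing ${\rm tr}(P(\rho-\sigma))$, which holds precisely because both $\rho$ and $\sigma$ are already diagonal on the first register; equivalently, in the first approach, that the trace norm of a Hermitian operator supported on orthogonal blocks splits as the sum of the blockwise trace norms.
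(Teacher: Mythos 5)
Your first argument is exactly the paper's proof: the paper computes $\delta(\rho,\sigma)=\frac{1}{2}{\rm tr}\sqrt{(\rho-\sigma)^\dag(\rho-\sigma)}=\frac{1}{2}\sum_x p_x\,{\rm tr}\sqrt{(\rho_x-\sigma')^\dag(\rho_x-\sigma')}\le\varepsilon\sum_x p_x=\varepsilon$, i.e.\ precisely the blockwise splitting of the trace norm that you use. Your alternative variational argument is also correct, but the main route is essentially identical to the paper's.
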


\begin{proof}
\[ \delta(\rho,\sigma)=\frac{1}{2}{\rm tr}\sqrt{(\rho-\sigma)^\dag(\rho-\sigma)}
=\frac{1}{2}\sum_x p_x {\rm tr}\sqrt{(\rho_x-\sigma')^\dag(\rho_x-\sigma')}
\le \varepsilon \sum_x p_x = \varepsilon. \]
\end{proof}

\begin{proof}~(Theorem \ref{fowfhide})
We will see the first property.
Let $p(y)=\Pr[f_n(U_n)=y]$ and $\mu(X)=|X|/2^n$.
For any $t\in [1,n]$, let $A_t=\{y\in \{0,1\}^n :
2^{-t}\le p(y) < 2^{-t+1}\}$. Since
$\bigcup_t A_t = f_n(\{0,1\}^n)$, there exists $t_0$
such that $\Pr[f_n(U_n)\in A_{t_0}]\ge 1/n$. Then,
we define $\Gamma_1$ and $\Gamma_2$ as follows:
\begin{eqnarray*}
\Gamma_1 & = & \{ \udn{x}: p(f_n(x))<2^{-t_0+1}\}~\mbox{and}\\
\Gamma_2 & = & \{ \udn{x}: p(f_n(x))\ge 2^{-t_0}\},
\end{eqnarray*}
and set $\Gamma=\Gamma_1\cap \Gamma_2$.
Thus, it is easy to see that $\Gamma_1\cup \Gamma_2 = \{0,1\}^n$
and $\mu(\Gamma) \ge 1/n$.

\medskip

Next, we will see the second property.
Let $C_1$ and $C_2$ be subsystems for ${\sf H}_{{\rm commit}_1}$ and
${\sf H}_{{\rm commit}_2}$, respectively. 
Assume that Alice has two bits $w_1$ (resp., $w_2$) for the first (resp., the second) 
half of the commitment. Also assume that $x$ falls into $\Gamma$.

Let $\rho$ be the quantum state ${\rm tr}_{C_2}(\rho_B)$. Then, $\rho$ 
can be written as 
\begin{eqnarray*}
\rho & = & \sum_{x\in {\Gamma},h\in H^{(1)}} 
\frac{1}{|\Gamma|\cdot |H^{(1)}|}\ket{h,h(f_n(x))}_{\theta(w_1)}\bra{h,h(f_n(x))}.
\end{eqnarray*}
Let
\begin{eqnarray*}
\iota_{+} & = & \sum_{z\in\{0,1\}^{t-\Delta_1},h\in H^{(1)}}
	\frac{1}{2^{t-\Delta_1}|H^{(1)}|} \ket{h,z}_{+}\bra{h,z} \quad\mbox{and}\\
\iota_{\times} & = & \sum_{z\in\{0,1\}^{t-\Delta_1},h\in H^{(1)}}
	\frac{1}{2^{t-\Delta_1}|H^{(1)}|} \ket{h,z}_{\times}\bra{h,z}.
\end{eqnarray*}
Then $\iota\stackrel{\rm def}{=}\iota_{+}=\iota_{\times}$ is the uniform distribution.
By Leftover Hash Lemma, we have $\delta(\rho,\iota)\le 2^{-\Delta_1/2}$.

Next we let $\rho'(y)$ 
be the quantum state ${\rm tr}_{C_1}(\rho_B)$ when $y=f_n(x)$ is given.
(Note that any elements in $f_n^{-1}(y)$ are also in $\Gamma$.
Thus, the likelihood of $x$ is the same as that of any other $x'\in f_n^{-1}(y)$.)
Then, $\rho'(y)$ can be written as 
\begin{eqnarray*}
\rho'(y) & = & \sum_{x\in f_n^{-1}(y),h\in H^{(2)}} \frac{1}{|f_n^{-1}(y)|\cdot|H^{(2)}|}\ket{h,h(x)}_{\theta(w_2)}\bra{h,h(x)}.
\end{eqnarray*}
Let
\begin{eqnarray*}
\iota_{+}' & = & \sum_{z\in\{0,1\}^{n-t-\Delta_2},h\in H^{(2)}}
	\frac{1}{2^{n-t-\Delta_2}|H^{(2)}|} \ket{h,z}_{+}\bra{h,z} \quad\mbox{and}\\
\iota_{\times}' & = & \sum_{z\in\{0,1\}^{n-t-\Delta_2},h\in H^{(2)}}
	\frac{1}{2^{n-t-\Delta_2}|H^{(2)}|} \ket{h,z}_{\times}\bra{h,z}.
\end{eqnarray*}
Then $\iota'\stackrel{\rm def}{=}\iota_{+}'=\iota_{\times}'$ is the uniform distribution.
By Leftover Hash Lemma, we have $\delta(\rho'(y),\iota')\le 2^{-\Delta_2/2}$ for any $y$.
By Lemma \ref{lem:grs} and the triangle inequality,
we have 
\[ \delta(\rho_B(w_1,w_2), (\iota,\iota')) \le 2^{-\Delta_1/2} + 2^{-\Delta_2/2} =
\frac{2}{\sqrt[8]{s(n)}} = \frac{1}{n^{\omega(1)}} \]
for any $w_1,w_2\in\{0,1\}$.
\end{proof}

\section{Hiding Amplification}
In the previous section, we showed that {\sf Protocol 1} based on
quantum one-way function holds a ``weak'' hiding property.
In this section, we amplify the ``weak'' hiding property to a ``strong'' one. 
We consider $n^2$ parallel executions
of {\sf Protocol 1} to amplify the hiding probability
from $1/n$ to $1-2^{-n^{\Omega(1)}}$.
We describe the resulting parallel protocol (called {\sf Protocol 2}) in Figure \ref{fig:p2}.

\begin{figure}[htbp]
\hrule
\medskip
\noindent
{\bf Parameters}: Integers $t\in [1,n]$, 
$\Delta_1\in [0,t]$ and $\Delta_2\in [0,n-t]$.\\[.5em]
{\bf Commit Phase}:
\begin{enumerate}\parsep=0pt\itemsep=0pt
\item Alice with her two bits $w_1$ and $w_2$
first chooses $x_1,\ldots, x_{n^2}\in (\{0,1\}^n)^{n^2}$ uniformly and computes 
$y_1=f_n(x_1),\ldots, y_{n^2}=f_n(x_{n^2})$.
Also, she uniformly and independently chooses pairwise independent hash functions 
$h_{1,1},\ldots,h_{1,n^2}\in (H^{(1)})^{n^2}$ and 
$h_{2,1},\ldots,h_{2,n^2}\in (H^{(2)})^{n^2}$.
\item Alice chooses $w_{1,1},\ldots, w_{1,n^2}\in (\{0,1\})^{n^2}$
and $w_{2,1},\ldots, w_{2,n^2}\in (\{0,1\})^{n^2}$ such that
$w_1=w_{1,1}\oplus\cdots \oplus w_{1,n^2}$ and $w_2=w_{2,1}\oplus\cdots \oplus w_{2,n^2}$.
\item Next, Alice sends the quantum state 
\[ \begin{array}{l}
\ket{h_{1,1},h_{1,1}(y_1)}_{\theta(w_{1,1})}\otimes \cdots \otimes
		\ket{h_{1,n^2},h_{1,n^2}(y_{n^2})}_{\theta(w_{1,n^2})}\\
\hspace*{5mm}
 \otimes \ket{h_{2,1},h_{2,1}(x_1)}_{\theta(w_{2,1})}\otimes \ldots \otimes
		\ket{h_{2,n^2},h_{2,n^2}(x_{n^2})}_{\theta(w_{2,n^2})}
\in 
({\sf H}_{{\rm commit}_1})^{\otimes n^2}\otimes ({\sf H}_{{\rm commit}_2})^{\otimes n^2}
\end{array}\]
to Bob.
\item 
Bob then stores the received quantum state $\rho_B$ until the reveal phase.
\end{enumerate}
{\bf Reveal Phase}:
\begin{enumerate}\parsep=0pt\itemsep=0pt
\item Alice announces the first decommitments $(w_{1,1},h_{1,1},y_1),\ldots,
(w_{1,n^2},h_{1,n^2},y_{n^2})$ 
and the second decommitments $(w_{2,1},h_{2,1},x_1),\ldots,(w_{2,n^2},h_{2,n^2},x_{n^2})$
to Bob.
\item Next, Bob measures the first register of $\rho_B$ with measurement
\[ 
\{P^{h,z}_{\theta(w_{1,1})}\}_{h\in H^{(1)},z\in {\it range}(h_{1,1})} 
\otimes \cdots \otimes
\{P^{h,z}_{\theta(w_{1,n^2})}\}_{h\in H^{(1)},z\in {\it range}(h_{1,n^2})}
\]
and 
obtains the classical output $(h_1,z_1,\ldots,h_{n^2},z_{n^2})$, where
each $(h_i,z_i)$ is in $H^{(1)}\times {\it range}(h_{1,i})$.
Also he simultaneously measures the second register with measurement
\[
\{P^{h',z'}_{\theta(w_{2,1})}\}_{h'\in H^{(2)},z'\in {\it range}(h_{2,1})}
\otimes \cdots \otimes
\{P^{h',z'}_{\theta(w_{2,n^2})}\}_{h'\in H^{(2)},z'\in {\it range}(h_{2,n^2})}
\]
and 
obtains the classical output $(h_1',z_1',\ldots, h_{n^2}',z_{n^2}')$, where
each $(h_i',z_i')$ is in $H^{(2)}\times {\it range}(h_{2,i})$.
\item Lastly, Bob accepts the first commitment if and only if $h_i(y_i)=z_i$
for every $i$, and recovers the first committed bit $w_1$ as
$w_1=w_{1,1}\oplus\cdots \oplus w_{1,n^2}$.
Also he accepts the second commitment if and only if $h_i'(z_i')=x_i$ and $y_i=f_n(x_i)$
for every $i$, and recovers the second committed bit $w_2$ as
$w_2=w_{2,1}\oplus\cdots \oplus w_{2,n^2}$.
\end{enumerate}

\hrule
\caption{{\sf Protocol 2}}\label{fig:p2}
\end{figure}


In the rest of this section, we state that
{\sf Protocol 2} achieves $(1-2^{-n^{\Omega(1)}})$-hiding
and preserves the binding property.

\begin{theorem}\label{thm:stronghide}
Let $f=\{f_n:\{0,1\}^n \rightarrow \{0,1\}^{n}\}_{n\in\nat}$ be
an $s(n)$-secure quantum one-way function family, where $s(n)=n^{\omega (1)}$.
Then, there exists $t=t_0\in[1,n]$ such that
{\sf Protocol~2} satisfies $(1-2^{-n^{\Omega(1)}})$-hiding if we set
$\Delta_1=\Delta_2= \frac{1}{4}\log s(n)$.
\end{theorem}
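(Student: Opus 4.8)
The plan is to amplify the $(1/n)$-hiding of {\sf Protocol 1} established in Theorem \ref{thm:fowfhide} using the $n^2$-fold parallel repetition together with the XOR-secret-sharing $w_1=\bigoplus_i w_{1,i}$ and $w_2=\bigoplus_i w_{2,i}$. I would keep the common threshold equal to the value $t=t_0$ produced by Theorem \ref{thm:fowfhide} (it depends only on $f$), and I would take the ``good'' set $\Gamma\subseteq\{0,1\}^n$ furnished there, so that $|\Gamma|\ge 2^n/n$ and, when $x$ is uniform in $\Gamma$, the four single-copy {\sf Protocol 1} commitment states---one per choice of committed pair---are negligibly close to one another, hence each within negligible trace distance of $\tau^{\ast}$, the state obtained when $(0,0)$ is committed. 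The candidate good set for {\sf Protocol 2} is then $\Gamma'=\{(x_1,\dots,x_{n^2}):x_j\in\Gamma\mbox{ for at least one }j\}$. Since the $x_j$ are sampled independently and uniformly, a uniformly random honest input tuple lies in $\Gamma'$ with probability at least $1-(1-1/n)^{n^2}\ge 1-e^{-n}=1-2^{-n^{\Omega(1)}}$, giving the first requirement of the $\gamma$-hiding definition with $\gamma=1-2^{-n^{\Omega(1)}}$.

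For the second requirement, I would fix target bits $w_1,w_2$, draw $(x_1,\dots,x_{n^2})$ uniformly from $\Gamma'$, and split this experiment according to the least index $j$ with $x_j\in\Gamma$. Conditioned on ``$j$ is the first good index'', $x_j$ is genuinely uniform over $\Gamma$ (the other $x_i$ get conditional distributions on which I impose nothing), so the single-copy hiding of Theorem \ref{thm:fowfhide} applies verbatim to copy $j$. Because the shares are honestly randomized, $(w_{1,i},w_{2,i})_{i\neq j}$ may be regarded as uniform and independent with $w_{1,j}=w_1\oplus\bigoplus_{i\neq j}w_{1,i}$ and $w_{2,j}=w_2\oplus\bigoplus_{i\neq j}w_{2,i}$; this is a legitimate re-description of the sharing since $n^2\ge 2$, and as $\bigoplus_{i\neq j}w_{1,i}$ is uniform, $w_{1,j}$ is a one-time pad on $w_1$ (and likewise for the second coordinate). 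Averaging over the hash functions, the joint {\sf Protocol 2} commitment state conditioned on a fixed value of $(w_{1,i},w_{2,i})_{i\neq j}$ is a tensor product of the $i\neq j$ single-copy states---whose distribution does not mention $w_1,w_2$---with the copy-$j$ single-copy state, which carries the $(w_1,w_2)$-dependent shares. By Theorem \ref{thm:fowfhide} that last factor is within negligible trace distance of $\tau^{\ast}$ for every value of its shares; since trace distance is unchanged by tensoring with a fixed state and is convex, replacing it by $\tau^{\ast}$ inside the average over $(w_{1,i},w_{2,i})_{i\neq j}$ costs only this single copy's hiding error, and leaves a state manifestly independent of $(w_1,w_2)$. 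Hence the commitment state conditioned on first good index $j$ is negligibly close to a $(w_1,w_2)$-independent state; averaging over $j$ (the conditional weights sum to one, since conditioning on $\Gamma'$ guarantees a first good index) shows the {\sf Protocol 2} commitment state for $x$ uniform in $\Gamma'$ is negligibly close to a single fixed state whatever the committed pair, so $(Z_1(0),Z_2(0))$, $(Z_1(0),Z_2(1))$, $(Z_1(1),Z_2(0))$, $(Z_1(1),Z_2(1))$ are negligibly close to each other.

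I expect the main obstacle to be the bookkeeping of the conditioning in the middle step: one must verify that conditioning on the first good index simultaneously (i) makes copy $j$'s input exactly uniform over $\Gamma$, so that Theorem \ref{thm:fowfhide} is literally applicable to that copy, and (ii) leaves the shares $(w_{1,i},w_{2,i})_{i\neq j}$ jointly uniform and independent of the targets, so that the remaining copies contribute a factor independent of $(w_1,w_2)$. Once this factorization is in place, the rest is the triangle inequality together with the key observation---the source of the promised ``drastic simplification'' over the recursive analysis of {\sf HNORV}---that only one copy's hiding error ever enters the estimate, since in every branch of the conditioning exactly one copy absorbs the secret-sharing constraint while all the others are controlled purely by the uniformity of their own shares, with no accumulation across the $n^2$ copies. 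Regularity of $f$ is not needed here, consistently with Theorem \ref{thm:fowfhide}.
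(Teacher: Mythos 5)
Your proposal is correct and follows essentially the same route as the paper: the same set $\Gamma'$ of tuples containing at least one coordinate in the $\Gamma$ of Theorem \ref{thm:fowfhide} gives the first property, and the second property is obtained by letting the copies outside one ``good'' index absorb uniform, $(w_1,w_2)$-independent shares while the good copy's state is negligibly close to a fixed state by the single-copy hiding argument. Your conditioning on the \emph{first} good index and the explicit convexity/tensoring step merely make precise what the paper does via ``assume $x_J\in\Gamma$ for some $J$'' and Lemma \ref{lem:grs}, so no further changes are needed.
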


Due to the parallel composition, the proof becomes quite simpler than
the proof of the hiding amplification in \cite{HNORV09}. The proof of Theorem
\ref{thm:stronghide} can be done by a standard probabilistic argument.

\begin{proof}
We will see the first property.
Recall that $\Gamma= \{x\in\{0,1\}^n: 2^{-t_0}\le p(f_n(x)) < 2^{-t_0+1} \}$ for some $t_0$.
Let $N=n^3$ and $\Gamma'=\{(x_1,\ldots,x_{n^2})\in (\{0,1\}^n)^{n^2}:
\exists i, x_i\in \Gamma\}$. 
We consider the probability $p$ where some $x_i$ falls in $\Gamma$.
Since $\mu(\Gamma)\ge 1/n$, we have that $p>1- (1-1/n)^{n^2}$.
By the fact that $1-t < e^{-t}$, we have $p\ge 1-e^{-n}>1-2^{-n}=1-2^{-N^{1/3}}$.

\medskip

Next, we will see the second property.
Let $(x_1,\ldots, x_{n^2})\in\Gamma'$. By the definition of $\Gamma'$, we may
assume that $x_J\in\Gamma$ for some $J\in [1,n^2]$.
Recall that
$Z_1(w_1)$ is of the form 
\[\ket{h_{1,1},h_{1,1}(f_n(x_1))}_{\theta(w_{1,1})}\otimes \cdots\otimes
\ket{h_{1,n^2},h_{1,n^2}(f_n(x_{n^2}))}_{\theta(w_{1,n^2})} \]
and
$Z_2(w_2)$ is of the form 
\[\ket{h_{2,1},h_{2,1}(x_1)}_{\theta(w_{2,1})}\otimes\cdots\otimes
\ket{h_{2,n^2},h_{2,n^2}(x_n^2)}_{\theta(w_{2,n^2})}.\]
Let $W(x_i,w_{1,i},w_{2,i})$ be the composition of the
$i$-th components of $Z_1(w_1)$ and $Z_2(w_2)$, that is,
\[W(x_i,w_{1,i},w_{2,i})=\ket{h_{1,i},h_{1,i}(f_n(x_i))}_{\theta(w_{1,i})}
\ket{h_{2,i},h_{2,i}(x_i)}_{\theta(w_{2,i})}.\]
Since $w_{1,1},\ldots,w_{1,n^2}$ and
$w_{2,1},\ldots,w_{2,n^2}$ are randomly chosen so as to satisfy that
$w_1=w_{1,1}\oplus \cdots \oplus w_{1,n^2}$ and 
$w_2=w_{2,1}\oplus \cdots \oplus w_{2,n^2}$, we may assume that
$w_{1,j}$ and $w_{2,j}$ with $j\ne J$ are uniformly and independently 
chosen from $\{0,1\}$ and $w_{1,J}$ and $w_{2,J}$ are determined by $w_1$, $w_2$, and
all $w_{1,j}$ and $w_{2,j}$ such that $j\ne J$. Thus, we can say that
$W(x_i,w_{1,i},w_{2,i})$ such that $i\ne J$ does not depend on the value
of $w_1$ and $w_2$.

On the other hand, $W(x_J,w_{1,J},w_{2,J})$ depends on the value $w_1$ and $w_2$. But,
we can show that it is $1/n^{\omega(1)}$-close to the uniform distribution 
by using the proof for the 2nd property of Theorem \ref{thm:fowfhide}.
From Lemma \ref{lem:grs}, we can say that $(Z_1(0),Z_2(0))$, $(Z_1(0),Z_2(1))$,
$(Z_1(1),Z_2(0))$ and $(Z_1(1),Z_2(1))$ are
$1/n^{\omega(1)}$-close to each other.
\end{proof}

\begin{theorem}\label{thm:parabind}
Let $f=\{f_n:\{0,1\}^n \rightarrow \{0,1\}^{n}\}_{n\in\nat}$ be
an $s(n)$-secure quantum one-way function family, where $s(n)=n^{\omega (1)}$.
Then {\sf Protocol 2} with setting of parameters 
$\Delta_1=\Delta_2=\frac{1}{4}\log s(n)$, is a 2-parallel quantum bit commitment scheme
that is computationally 1-out-of-2 binding regardless of the setting~of~$t$.
\end{theorem}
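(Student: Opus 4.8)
The plan is to reduce everything to the single-copy statement, Theorem \ref{thm:2para}, since {\sf Protocol 2} is nothing but the $n^2$-fold parallel repetition of {\sf Protocol 1} in which the two committed bits have been secret-shared, $w_1=w_{1,1}\oplus\cdots\oplus w_{1,n^2}$ and $w_2=w_{2,1}\oplus\cdots\oplus w_{2,n^2}$. I would keep the same ``heavy'' set $S_t$ as in the proof of Theorem \ref{thm:2para}, and take the 1-out-of-2 binding set for {\sf Protocol 2} to be the one induced by $S_t$ through the $n^2$ coordinates; as in Theorem \ref{thm:2para}, the computational binding of the first half and the statistical binding of the second half are then handled in two disjoint cases according to this inner classical state, with $\Delta_1=\Delta_2=\frac14\log s(n)$.

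\emph{Computational part.} Suppose a p-size adversary $\mathcal A$ violates the binding of the first half of {\sf Protocol 2} with advantage $1/p(n)$ for some polynomial $p$, on the condition that the inner state lies in the binding set. Because the first committed bit is $w_{1,1}\oplus\cdots\oplus w_{1,n^2}$, any pair of openings witnessing $b_0(n)+b_1(n)\ge 1+1/p(n)$ — one revealing $w_1=0$, one revealing $w_1=1$ — must disagree on $w_{1,i}$ for at least one coordinate $i$, so a hybrid/averaging argument over the $n^2$ coordinates isolates an index $i^\ast$ on which $\mathcal A$ breaks the $i^\ast$-th copy of the first half of {\sf Protocol 1} with advantage $\Omega(1/(n^2 p(n)))$. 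Here I would invoke the observation emphasised right after Theorem \ref{thm:2para}: the $n^2-1$ remaining commitment registers, together with whatever entanglement $\mathcal A$ has introduced among the copies, are folded into the private register ${\sf H}_{\rm keep}$ of a single-copy adversary of the required form $({\cal D}_{n,0},{\cal U}_n)$, with ${\cal U}_n$ still confined to ${\sf H}_{\rm extra}\otimes{\sf H}_{\rm keep}\otimes{\sf H}_{\rm open}$. Applying the reduction of Theorem \ref{thm:2para}, which itself runs through Theorem \ref{lem:L-bind}, to copy $i^\ast$ then produces a p-size inverter for $f_n$ succeeding with probability $\Omega((n^2 p(n))^{-4}\cdot s(n)^{-3/4})$; this exceeds $1/s(n)$ for all large $n$ because $s(n)=n^{\omega(1)}$ and $p$ is a polynomial, the polynomial losses from the hybrid and from the union bound over $i^\ast$ being absorbed in that gap. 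This contradicts the $s(n)$-security of $f$.

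\emph{Statistical part.} On the complementary condition (the inner state places the relevant coordinates among the ``light'' inputs), flipping $w_2=w_{2,1}\oplus\cdots\oplus w_{2,n^2}$ again requires flipping some $w_{2,i}$, so it suffices to bound, for each copy $i$, the quantity $b_0^{(i)}(n)+b_1^{(i)}(n)-1$ of the second half of that copy. This is exactly the optimisation at the end of the proof of Theorem \ref{thm:2para}: for a light input one has $\xi\le 2^{\Delta_2-\Delta_3}=s(n)^{-1/4}$, hence $b_0^{(i)}(n)+b_1^{(i)}(n)\le 1+\sqrt\xi$. Summing the $n^2$ contributions — equivalently, redoing the projector-norm computation in the proof of Theorem \ref{lem:L-bind} with the $n^2$-fold tensor product of the second-half registers in place of a single one — gives binding parameter at most $n^2\cdot s(n)^{-1/8}=1/n^{\omega(1)}$, as required.

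The part I expect to be the real work is the hybrid step in the computational case. Unlike the classical sequential construction of \cite{HNORV09}, an adversary against {\sf Protocol 2} may entangle the $n^2$ commitment registers arbitrarily, so ``$\mathcal A$ breaks some copy'' cannot simply be read off a transcript: it must be extracted from the aggregate quantity $b_0(n)+b_1(n)$ by carefully choosing the coordinate $i^\ast$ and the measurement applied to the remaining copies, and one must verify that the resulting object genuinely has the one-copy adversarial form $({\cal D}_{n,0},{\cal U}_n)$ demanded by Theorem \ref{lem:L-bind}. Once this reduction is in place, the rest is a routine repetition of the one-copy arguments together with the probabilistic estimates already appearing in the proofs of Theorems \ref{thm:2para} and \ref{thm:fowfhide}.
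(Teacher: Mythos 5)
Your proposal follows essentially the same route as the paper's proof: the same coordinate-wise heavy set built from $S_t$, isolation of a single copy of {\sf Protocol 1} from the multi-copy adversary (the paper via accumulation of the per-copy failure probabilities plus a random guess of the index inside the inverter, you via a hybrid over the $n^2$ coordinates, both costing only a factor polynomial in $n$ that is absorbed by $s(n)=n^{\omega(1)}$), folding the other commitment registers into ${\sf H}_{\rm keep}$ so that Theorem \ref{lem:L-bind} and the inversion argument of Theorem \ref{thm:2para} apply to that copy, and the per-copy bound $\sqrt{\xi}\le s(n)^{-1/8}$ accumulated over the $n^2$ copies for the statistical half. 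The only point the paper makes that you omit is that the isolated coordinate need not be the heavy one, which the paper dispatches by noting that Theorem \ref{lem:L-bind} is indifferent to the underlying relation ($R_n$ versus $R_n'$); at the paper's level of rigor this is a minor remark rather than a gap.
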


The above theorem says that {\sf Protocol 2} has 1-out-of-2 binding property.
Specifically speaking, the computational binding of the first half commitment
can be guaranteed in some case and the statistical binding of the second
half commitment can be guaranteed in the other case. The computational binding
of the first half commitment in {\sf Protocol 2} is reduced to the computational
binding of the first half commitment in {\sf Protocol 1}. The statistical
binding of the second half commitment in {\sf Protocol 2} can be shown by a
probabilistic argument.

To prove that {\sf Protocol 2} is computationally 1-out-of-2 binding, we have to
specify a set that controls the 1-out-of-2 property as in the proof of Theorem
\ref{thm:2para}. In the proof of Theorem \ref{thm:2para}, $S_t$ is such a set.
For the proof of Theorem \ref{thm:parabind}, we will use $S_t' 
= \{ (x_1,\ldots,x_{n^2})\in (\{0,1\}^n)^{n^2}: 
\exists i, x_i\in S_t\}$. Even if we can use the reduction to the $j$-th subprotocol,
we cannot know whether $x_j\in S_t$ or not. If $x_j\in S_t$, then the reduction
goes through. If we cannot assume that that $x_j\in S_t$, we can say that 
$x_j\in \{0,1\}^n$. Since we do not have to know some underlying relation to apply
Non-interactive Quantum Hashing Theorem, we can show that the reduction still goes
through. 

\begin{proof}
The proof is similar to the proof for Theorem \ref{thm:2para}.
For every $t\in [1,n]$, we define the set of ``heavy'' strings to be
\[S_t' = \{ (x_1,\ldots,x_{n^2})\in (\{0,1\}^n)^{n^2}: 
\exists i, x_i\in S_t\}, \]
where $S_t$ is defined in the proof of Theorem \ref{thm:2para}.

We will show that if $(x_1,\ldots,x_{n^2})\in S_t'$ is chosen in the 
first step of Commit Phase then the first half is binding and 
if $(x_1,\ldots,x_{n^2})\not\in S_t'$ then the second half is binding.

First, we show a reduction from inverting $f_n$ to violating the binding property
of {\sf Protocol 2} in the case of $(x_1,\ldots,x_{n^2})\in S_t'$. 
Recall that in the proof of Theorem \ref{thm:2para}
$f_n':H^{(1)}\times \{0,1\}^n \rightarrow H^{(1)}\times \{0,1\}^{t-\Delta_1}$ 
is a function that maps $(h,x)$ to $(h,h(f_n(x)))$,
$R'_n = \{ (f'_n(h,x), (h,x) ) : x\in S_t~\mbox{and}~h\in H^{(1)}\}$ and 
$W_{h,\eta}=\{\udn{x} : (\eta, (h,x))\in R_n'\}$.
We also define $R_n = \{ (f'_n(h,x), (h,x) ) : x\in \{0,1\}^n~\mbox{and}~h\in H^{(1)}\}$.

Let ${\cal A}_3$ be a quantum algorithm to violate the binding property
of {\sf Protocol 2} with probability $\varepsilon(n)$.
This means that ${\cal A}_3$ can send a quantum state in Commit Phase 
so that Bob can accept it either as $0$-commitment with probability $b_0(n)$ and 
as $1$-commitment with probability $b_1(n)$, where 
$b(n)=b_0(n)+b_1(n)\ge 1 + \varepsilon(n)$.
To make Bob accept the quantum state as a valid commitment in {\sf Protocol 2}, 
${\cal A}_3$ has to make Bob accept all executions of sub-protocol {\sf Protocol 1}.
Let $b_w^{(i)}(n)$ be the probability that ${\cal A}_3$ can make Bob accept
the $i$-th sub-protocol as $w$-commitment. We set $b^{(i)}(n)=b_0^{(i)}(n)+
b_1^{(i)}(n)$. 
Let $\tilde{b}_0(n)$ (resp., $\tilde{b}_1(n)$)
be the probability where ${\cal A}_3$ fails to make Bob accept
the quantum state as $0$-commitment (resp., $1$-commitment).
Similarly, we define $\tilde{b}_0^{(i)}(n)$ and $\tilde{b}_1^{(i)}(n)$ 
for each $i\in [1,n^2]$. Then, we have $\tilde{b}_0(n)+\tilde{b}_1(n)\le 1-\varepsilon$.
Since the failure probabilities are accumulative, there exists an index $j\in [1,n^2]$
such that $\tilde{b}_0^{(j)}(n)+\tilde{b}_1^{(j)}(n)\le 1-\varepsilon$.
Hence, we have $b_0^{(j)}(n)+b_1^{(j)}(n)\ge 1+\varepsilon$.
Thus, we can assume that a quantum algorithm ${\cal A}_4$
to violate the binding property of {\sf Protocol 1}
with probability $\varepsilon$.
Note that the violation (by ${\cal A}_4$) against the binding property of 
{\sf Protocol 1} is respect to either $R_n$ or $R_n'$. Fortunately, we do not have
to know which relation should be considered, since the algorithm with respect to
$R_n$ is the same as the one with respect to $R_n'$.
If ${\cal A}_4$ violates the binding property of the $j$-th sub-protocol
and $x_j\in S_t$, ${\cal A}_4$ does with respect to $R_n'$.
If ${\cal A}_4$ violates the binding property of the $j$-th sub-protocol
and $x_j\in \{0,1\}^n$, ${\cal A}_4$ does with respect to $R_n$.
Here, we consider only the case that ${\cal A}_4$ does with respect to $R_n'$,
since the other case is similar and easier to show.

From Theorem \ref{lem:L-bind}, we have another algorithm ${\cal A}_5$
satisfying that
\[ \begin{array}{l}\Pr[{\cal A}_5( H^{(1,1)},H^{(1,1)}(f_n(U_n^{(1)})),\ldots,
H^{(1,n^2)},H^{(1,n^2)}(f_n(U_n^{(n^2)})))=(j,z)\\
\hspace*{4cm}\land ~ z \in W_{H^{(1,j)},H^{(1,j)}(f_n(U_n^{(j)}))}]
\ge \varepsilon(n)^2/4,
\end{array}\]
where $H^{(1,1)},\ldots,H^{(1,n^2)}$ are independent and identical distributions
to $H^{(1)}$ and
$U_n^{(1)},\ldots,U_n^{(n^2)}$ are independent and identical distributions to $U_n$.

By the similar discussion in the proof of Theorem \ref{thm:2para},
we can say that
\[ \begin{array}{l}
\Pr[{\cal A}_5( H^{(1,1)},H^{(1,1)}(f_n(U_n^{(1)})),\ldots,
H^{(1,j-1)},H^{(1,j-1)}(f_n(U_n^{(j-1)})),
H^{(1,j)},U_{t-\Delta},\\
\hspace*{2cm}H^{(1,j+1)},H^{(1,j+1)}(f_n(U_n^{(j+1)})),\ldots,
H^{(1,n^2)},H^{(1,n^2)}(f_n(U_n^{(n^2)})))=(j,z)\\
\hspace*{4cm}\land ~ z \in W_{H^{(1,j)},U_{t-\Delta_1}}]
\ge \varepsilon(n)^4/64.
\end{array}\]

We consider an algorithm $\cal B$ that on input $y=f_n(x)$, 
picks randomly an integer $j'\in[1,n^2]$, a hash function $h\in H^{(1)}$.
$\cal B$ also picks randomly $x_1,\ldots,x_{j'-1},x_{j'+1},\ldots,x_{n^2}$
and $h_1,\ldots,h_{j'-1},h_{j'+1},\ldots,h_{n^2}$, computes
$y_1=f_n(x_1),\ldots, y_{j'-1}=f_n(x_{j'-1}),
y_{j'+1}=f_n(x_{j'+1}),\ldots, y_{n^2}=f_n(x_{n^2})$
and outputs the second part of 
${\cal A}_5(h_1,h_1(y_1),\ldots,h_{j'-1},h_{j'-1}(y_{j'-1}),\break
h,h(y),
h_{j'+1},h_{j'+1}(y_{j'+1}),\ldots,h_{n^2},h_{n^2}(y_{n^2}))$. 
Then, we have the following.
\begin{eqnarray*}
\lefteqn{\Pr[{\cal B}(f_n(U_n))\in f_n^{-1}(f_n(U_n))]}\\
& \ge &  {\bf E}_{h\leftarrow {H^{(1)}}}[\Pr[{\cal A}_5(h_1,h_1(f_n(U_n^{(1)})),\ldots,
h_{j'-1},h_{j'-1}(f_n(U_n^{(j'-1)})), h, h(f_n(U_n)),\\
&& \hspace*{2cm}h_{j'+1},h_{j'+1}(f_n(U_n^{(j'+1)})),\ldots, h_{n^2},h_{n^2}(f_n(U_n^{(n^2)})))=(j,z)\\
&& \hspace*{4cm}\land j=j' \land z \in f_n^{-1}(f_n(U_n))]]\\
& = & \frac{1}{n^2}\cdot
{\bf E}_{h\leftarrow {H^{(1)}}}[\Pr[{\cal A}_5(h_1,h_1(f_n(U_n^{(1)})),\ldots,
h_{j'-1},h_{j'-1}(f_n(U_n^{(j'-1)})), h, h(f_n(U_n)),\\
&& \hspace*{2cm} h_{j'+1},h_{j'+1}(f_n(U_n^{(j'+1)})),\ldots, h_{n^2},h_{n^2}(f_n(U_n^{(n^2)})))=(j,z)\\
&& \hspace*{4cm} \land z \in f_n^{-1}(f_n(U_n))]].
\end{eqnarray*}
The rest of the probabilistic analysis is similar to the proof of Theorem \ref{thm:2para}.
This shows that
\[ \Pr[{\cal B}(f_n(U_n))\in f_n^{-1}(f_n(U_n))]\ge s(n)^{-3/4}\cdot
\varepsilon(n)^4/64n^{2},  \]
which is greater than $1/s(n)$ if $\varepsilon$ is non-negligible.

Next, we consider the case $(x_1,\ldots, x_{n^2})\not\in S_t'$.
By the definition of $S_t'$, we have $x_i\not\in S_t$ for all $i$.
Thus, we can use the discussion for the proof of Theorem \ref{thm:2para}
for each bit $w_{2,i}$. This means that the value of $b^{(i)}(n)$ is less than
$1 + 1/n^{\omega(1)}$ for each bit $w_{2,i}$. 
Let us consider the event that Bob accepts the quantum state sent by Alice
in Commit Phase of {\sf Protocol 2} as $0$-commitment (or, $1$-commitment). 
Let $p$ be the probability
that this event occurs. Since this even occurs if Bob
accepts all decommitments of the sub-protocol, we can write $p=p_1\cdot\cdots p_{n^2}$,
where $p_i$ is either the probability that Bob accepts the quantum state
sent by Alice in Commit Phase of the $i$-th sub-protocol as $0$-commitment
or the probability that Bob accepts the quantum state
sent by Alice in Commit Phase of the $i$-th sub-protocol as $1$-commitment.
Thus, the best strategy for cheating is to behave honestly
for $n^2-1$ executions of the sub-protocol and maliciously for
just one execution. Hence, we can upper-bound $b(n)$ by 
$1^{n^2-1}(1+ 1/n^{\omega(1)})=1+ 1/n^{\omega(1)}$.
\end{proof}

\section{Statistically-Hiding Commitment from ${2\choose 1}$-Binding Commitment}
We have obtained the strongly-hiding 1-out-of-2 binding quantum commitment
based on quantum one-way function. But it is not a single scheme but a family
of scheme candidates. First, we construct a family of candidates for normal 
statistically-hiding quantum bit commitment from the family of candidates
for the strongly-hiding 1-out-of-2 binding quantum commitment. Next, we construct 
a single normal statistically-hiding quantum bit commitment from 
the family of candidates for the statistically-hiding quantum bit commitment.

\subsection{Statistically-Hiding Quantum Commitment Family from
${2\choose 1}$-Binding Quantum Commitment Family}

{\sf Protocol 2} consists of the first half commitment and the second half commitment.
We denote by ${\sf P2first}(w_1)$ the first half commitment with the committed bit $w_1$
and by ${\sf P2second}(w_2)$ the second half commitment with the committed bit $w_2$.
We consider the 
protocol (called {\sf Protocol 3}) in Figure \ref{fig:p3}.


\begin{figure}[htbp]
\hrule
\medskip
\noindent
{\bf Parameters}: Integers $t\in [1,n]$, 
$\Delta_1\in [0,t]$ and $\Delta_2\in [0,n-t]$. (These are succeeded to the sub-protocol
${\sf P2first}$ and ${\sf P2second}$.)\\[.5em]
{\bf Commit Phase}:
\begin{enumerate}\parsep=0pt\itemsep=0pt
\item Alice with her bit $w$
executes ${\sf P2first}(w)$ and ${\sf P2second}(w)$ in parallel.
\end{enumerate}
{\bf Reveal Phase}:
\begin{enumerate}\parsep=0pt\itemsep=0pt
\item Alice sends decommitments for ${\sf P2first}(w)$ and ${\sf P2second}(w)$
and Bob recovers the committed bits $w'$ and $w''$, respectively.
\item Bob verifies the correctness of the decommitments.
If the verification procedures for both 
${\sf P2first}(w)$ and ${\sf P2second}(w)$ are passed and $w'=w''$ 
then Bob accepts.
\end{enumerate}

\hrule
\caption{{\sf Protocol 3}}\label{fig:p3}
\end{figure}

\begin{theorem}\label{thm:nmfam}
Let $f=\{f_n:\{0,1\}^n \rightarrow \{0,1\}^{n}\}_{n\in\nat}$ be
an $s(n)$-secure quantum one-way function family, where $s(n)=n^{\omega (1)}$.
Then {\sf Protocol 3} with setting of parameters 
$\Delta_1=\Delta_2=\frac{1}{4}\log s(n)$, is a computationally-binding
quantum bit commitment scheme regardless of the setting~of~$t$.
Also, there exists $t=t_0\in [1,n]$ such that
{\sf Protocol 3} with the same parameter for $\Delta_1$ and $\Delta_2$
is statistically-hiding.
\end{theorem}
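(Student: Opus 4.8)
The plan is to establish the two assertions of Theorem~\ref{thm:nmfam} separately, each time reducing to the corresponding property of {\sf Protocol 2} already proved in Theorems~\ref{thm:parabind} and~\ref{thm:stronghide}. The structural fact behind both reductions is that an honest commitment to $w$ in {\sf Protocol 3} is by definition the parallel composition of ${\sf P2first}(w)$ and ${\sf P2second}(w)$, and that Bob accepts a decommitment to $w$ precisely when the ${\sf P2first}$ verification accepts $w$, the ${\sf P2second}$ verification accepts $w$, and the two recovered bits agree. For a cheating Alice against {\sf Protocol 3} let $b_0,b_1$ be the optimal probabilities of revealing $0,1$ and $b(n)=b_0+b_1-1$; let $b_0',b_1',b'(n)$ and $b_0'',b_1'',b''(n)$ be the analogous quantities for ${\sf P2first}$ and ${\sf P2second}$.

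For the binding part I would first record the two unconditional inequalities $b_w\le b_w'$ and $b_w\le b_w''$, valid for every $t$ and every $p$-size adversary against {\sf Protocol 3}: the {\sf Protocol 3} reveal test for $w$ is a refinement of the ${\sf P2first}$ (resp.\ ${\sf P2second}$) reveal test for $w$, so dropping the other conjuncts only increases the acceptance probability, and since the ${\sf P2first}$ verification acts only on the ${\sf P2first}$ commitment registers together with the announced classical decommitments, its marginal outcome is unaffected by Bob also holding the ${\sf P2second}$ registers, so a {\sf Protocol 3} strategy restricts to a legitimate ${\sf P2first}$ (resp.\ ${\sf P2second}$) cheating strategy. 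Hence $b(n)\le\min\{b'(n),b''(n)\}$. Now I would invoke Theorem~\ref{thm:parabind} with its controlling set (denoted $S_t'$ there): conditioned on the event that the chosen $x$-tuple lies in $S_t'$, ${\sf P2first}$ is $\varepsilon$-computationally binding for every $\varepsilon=1/\mathrm{poly}$, so $b(n)$ conditioned on $x\in S_t'$ is negligible; conditioned on the complementary event, ${\sf P2second}$ is $\varepsilon$-statistically binding, so $b(n)$ conditioned on $x\notin S_t'$ is $1/n^{\omega(1)}$; averaging over the two events gives $b(n)\le 1/n^{\omega(1)}$. Thus {\sf Protocol 3} is computationally binding for every choice of $t$.

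For the hiding part I would fix $t=t_0$ as produced by Theorem~\ref{thm:stronghide}, together with the associated set $\Gamma'\subseteq(\{0,1\}^n)^{n^2}$ of density at least $1-2^{-n^{\Omega(1)}}$. Decompose Bob's commitment state as $\rho_B(w)=\sum_{\bar x}\Pr[\bar x]\,\sigma_{\bar x}(w)$, where $\sigma_{\bar x}(w)$ is the {\sf Protocol 3} commitment state when the honest tuple equals $\bar x$ and the hash functions and bit-splittings are averaged out; this $\sigma_{\bar x}(w)$ is exactly the state $(Z_1(w),Z_2(w))$ of {\sf Protocol 2} conditioned on $\bar x$. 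By joint convexity of the trace distance, $\delta(\rho_B(0),\rho_B(1))\le\sum_{\bar x}\Pr[\bar x]\,\delta(\sigma_{\bar x}(0),\sigma_{\bar x}(1))$. For $\bar x\in\Gamma'$, the computation in the proof of Theorem~\ref{thm:stronghide}---which fixes the tuple inside $\Gamma'$, picks an index $J$ with $x_J\in\Gamma$, and combines the single-instance hiding of Theorem~\ref{thm:fowfhide} with Lemma~\ref{lem:grs}---shows that the four states $(Z_1(w_1),Z_2(w_2))$ conditioned on $\bar x$ are pairwise $1/n^{\omega(1)}$-close, so $\delta(\sigma_{\bar x}(0),\sigma_{\bar x}(1))\le 1/n^{\omega(1)}$; for $\bar x\notin\Gamma'$ bound the distance trivially by $1$. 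Since $\Pr[\bar x\notin\Gamma']\le 2^{-n^{\Omega(1)}}$, this yields $\delta(\rho_B(0),\rho_B(1))\le 1/n^{\omega(1)}+2^{-n^{\Omega(1)}}$, which is negligible, so {\sf Protocol 3} with $t=t_0$ is statistically hiding.

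The main obstacle lies in the binding argument, at the step turning a break of {\sf Protocol 3} into a break of one of the two halves of {\sf Protocol 2} inside the regime where that half is genuinely binding. A cheating Alice supplies only a quantum commitment, so ``conditioning on the $x$-tuple lying in $S_t'$'' is not literally conditioning on a value she chose; this has to be handled either by extracting the relevant classical register from the commitment (as in the protocol-specific binding notion already used for {\sf Protocol 1} and {\sf Protocol 2}), or---more safely---by re-running the inversion reduction behind Theorem~\ref{thm:parabind} directly: locate the bad sub-instance $j$, and apply Non-interactive Quantum Hashing Theorem (Theorem~\ref{lem:L-bind}) with $W_n$ taken to be $S_t$ or all of $\{0,1\}^n$ according to the revealed $x_j$, using that the resulting inverter is the same in both cases. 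Everything else---in particular the entire hiding part---is a routine combination of joint convexity with the trace-distance estimates already established in Theorems~\ref{thm:parabind} and~\ref{thm:stronghide}.
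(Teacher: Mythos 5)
Your proposal is correct and follows essentially the same route as the paper, which itself only sketches this theorem: hiding by the argument of Theorem~\ref{thm:stronghide} (averaging over the tuple, using the density of $\Gamma'$ and Lemma~\ref{lem:grs}), and binding from the 1-out-of-2 binding of {\sf Protocol 2} combined with the equality check $w'=w''$, so that whichever half is binding in the relevant regime forces $b(n)$ to be negligible. Your flagged subtlety about ``conditioning on the $x$-tuple'' for a cheating committer, and the fix of re-running the reduction via Theorem~\ref{lem:L-bind} with $W_n$ equal to $S_t$ or $\{0,1\}^n$ according to the revealed value, is exactly how the paper's own proofs of Theorems~\ref{thm:2para} and~\ref{thm:parabind} handle it, so your write-up is in fact more detailed than the paper's sketch.
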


The hiding property can be shown by the argument in the proof of 
Theorem \ref{thm:stronghide}. Basically, {\sf Protocol 3} has the 1-out-of-2
binding property. Thus, either {\sf P2first} or {\sf P2second} must have
the binding property. Even if the adversary can violate either {\sf P2first} or 
{\sf P2second}, such violation can be detected by the equality check $w'=w''$ in
Reveal Phase.
Theorem \ref{thm:nmfam} can be similarly shown as 
Theorems \ref{thm:stronghide} and \ref{thm:parabind}.

\subsection{From a family To single BC}
As mentioned in Theorem \ref{thm:nmfam}, there exists a value $t$ such
that {\sf Protocol 3} has both the computational binding and statistical hiding.
But, we do not know the right value of $t$. By using a similar technique in
Section 7, we consider a combined protocol of {\sf Protocol} with different parameters.

${\sf P3}(t,w)$ denotes the commit phase of {\sf Protocol 3} 
with the committed value $w$ and parameter $t$. 
We consider the 
protocol (called {\sf Protocol 4}) in Figure \ref{fig:p4}.

\begin{figure}[htbp]
\hrule
\medskip
\noindent
{\bf Parameters}: Integers $\Delta_1\in [0,t]$ and $\Delta_2\in [0,n-t]$. 
(These are succeeded to the sub-protocol ${\sf P3}$.)\\[.5em]
{\bf Commit Phase}:
\begin{enumerate}\parsep=0pt\itemsep=0pt
\item Alice with her bit $w$ chooses $w_1,\ldots,w_n\in (\{0,1\})^n$
such that $w=w_1\oplus \cdots \oplus w_n$.
\item Alice executes ${\sf P3}(1,w_1),\ldots, {\sf P3}(n,w_n)$ in parallel.
\end{enumerate}
{\bf Reveal Phase}:
\begin{enumerate}\parsep=0pt\itemsep=0pt
\item Alice sends decommitment of ${\sf P3}(i,w_i)$ for each $i$
and Bob obtains the committed bits $w_i'$ for all $i$ and computes
$w'=w_1'\oplus \cdots \oplus w_n'$.
\item Bob verifies the correctness of the decommitments.
If all the verification procedures are passed then Bob accepts.
\end{enumerate}

\hrule
\caption{{\sf Protocol 4}}\label{fig:p4}
\end{figure}


\begin{theorem}\label{thm:final}
Let $f=\{f_n:\{0,1\}^n \rightarrow \{0,1\}^{n}\}_{n\in\nat}$ be
an $s(n)$-secure quantum one-way function family, where $s(n)=n^{\omega (1)}$.
Then {\sf Protocol 4} with setting of parameters 
$\Delta_1=\Delta_2=\frac{1}{4}\log s(n)$, is a 
computationally-binding and statistically-hiding
quantum bit commitment scheme.
\end{theorem}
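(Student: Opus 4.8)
The plan is to derive Theorem~\ref{thm:final} by combining the two halves of Theorem~\ref{thm:nmfam} with a standard ``XOR-secret-sharing over an unknown good index'' argument, exactly analogous to the passage from {\sf Protocol 1} to {\sf Protocol 2} (Theorem~\ref{thm:stronghide}) and from {\sf Protocol 2} to {\sf Protocol 3}. Recall that Theorem~\ref{thm:nmfam} gives us two facts about ${\sf P3}(t,\cdot)$: (i) for \emph{every} $t\in[1,n]$ it is computationally binding (with the stated parameters), and (ii) for \emph{some} unknown $t_0\in[1,n]$ it is statistically hiding. {\sf Protocol 4} runs ${\sf P3}(1,w_1),\ldots,{\sf P3}(n,w_n)$ in parallel with $w=w_1\oplus\cdots\oplus w_n$, so it suffices to show that this composition inherits binding from \emph{all} coordinates and hiding from \emph{one} good coordinate.

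\medskip
\noindent\textbf{Binding.} First I would handle the computational binding of {\sf Protocol 4}, reducing it to the computational binding of a single ${\sf P3}(i,\cdot)$. Suppose an adversary $\mathcal{A}$ can open {\sf Protocol 4} to both $0$ and $1$ with $b_0(n)+b_1(n)\ge 1+\varepsilon(n)$ for some non-negligible $\varepsilon$. To make Bob accept a recovered bit $w'=w_1'\oplus\cdots\oplus w_n'$, the adversary must have Bob accept \emph{every} sub-commitment ${\sf P3}(i,\cdot)$; writing the failure probabilities $\tilde b_0(n),\tilde b_1(n)$ and using that failures are accumulative over the $n$ coordinates (the same argument as in the proof of Theorem~\ref{thm:parabind}), there must exist an index $j\in[1,n]$ with $\tilde b_0^{(j)}(n)+\tilde b_1^{(j)}(n)\le 1-\varepsilon$, hence $b_0^{(j)}(n)+b_1^{(j)}(n)\ge 1+\varepsilon$. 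That contradicts the computational binding of ${\sf P3}(j,\cdot)$ guaranteed by Theorem~\ref{thm:nmfam} (note this holds regardless of the value of $t=j$, which is the crucial point: we do not need to know the ``good'' $t_0$ for binding). A minor technical wrinkle, as in Theorem~\ref{thm:parabind}, is that the reduction operates on the $j$-th sub-protocol without knowing the relevant underlying relation; but because the inversion argument behind Non-interactive Quantum Hashing Theorem does not require knowing the relation, the reduction still goes through, and after the usual $1/n$ loss from guessing $j$ we obtain a polynomial-factor degradation of the inversion probability that still beats $1/s(n)$.

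\medskip
\noindent\textbf{Hiding.} Next I would show statistical hiding of {\sf Protocol 4}. Fix the index $t_0\in[1,n]$ for which ${\sf P3}(t_0,\cdot)$ is statistically hiding (Theorem~\ref{thm:nmfam}). In {\sf Protocol 4}, the bits $w_1,\ldots,w_n$ are a uniformly random additive sharing of $w$, so we may view $w_{t_0}$ as determined by $w$ together with the other $w_j$ ($j\ne t_0$), and the $w_j$ with $j\ne t_0$ as uniform and independent of $w$. Thus the sub-commitments ${\sf P3}(j,w_j)$ for $j\ne t_0$ carry no information about $w$ at all, and the only $w$-dependent part of Bob's state is ${\sf P3}(t_0,w_{t_0})$, which is statistically close (within $1/n^{\omega(1)}$) to a fixed state independent of $w_{t_0}$ by the hiding of ${\sf P3}(t_0,\cdot)$. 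Invoking Lemma~\ref{lem:grs} to lift this closeness through the tensor product with the (classical-controlled) states of the remaining coordinates, the full commitment state $\rho_B(0)$ is negligibly close to $\rho_B(1)$, which is statistical hiding. Correctness is immediate since each sub-protocol is correct and Bob recomputes $w'=w_1'\oplus\cdots\oplus w_n'$.

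\medskip
\noindent\textbf{Main obstacle.} The conceptually delicate point is the binding reduction: when we isolate the $j$-th coordinate we cannot certify whether the relevant ``heavy-set'' condition (the analogue of $x_j\in S_t$ in Theorem~\ref{thm:parabind}) holds, so the adversary for ${\sf P3}(j,\cdot)$ might be violating binding with respect to either relation. The resolution—already used in Theorem~\ref{thm:parabind}—is that Non-interactive Quantum Hashing Theorem (Theorem~\ref{lem:L-bind}) yields an inverter \emph{without} needing to know which subset $W_n$ defines the relation, so the same algorithm works in both cases; I would simply cite that theorem and the structure of the proof of Theorem~\ref{thm:parabind} rather than re-deriving the constants. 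Everything else is routine bookkeeping of negligible error terms and polynomial factors.
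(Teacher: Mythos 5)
Your proposal matches the paper's intended argument: the paper justifies Theorem~\ref{thm:final} only by remarking that it can be shown as Theorems~\ref{thm:stronghide} and~\ref{thm:parabind}, i.e., statistical hiding via the XOR-sharing argument that isolates the single good parameter $t_0$, and computational binding via the accumulative-failure argument that isolates one sub-protocol ${\sf P3}(j,\cdot)$ and reduces (relation-agnostically, through Theorem~\ref{lem:L-bind}) to inverting $f_n$. Your write-up fills in exactly those steps, including the $1/n$ loss from guessing the coordinate, so it is correct and takes essentially the same route as the paper.
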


Theorem \ref{thm:final} can be also shown as 
Theorems \ref{thm:stronghide} and \ref{thm:parabind}. 

%
%
%

\section{Concluding Remarks}
We have derived a quantum and non-interactive version (Non-interactive
Quantum Hashing Theorem) of the new interactive
hashing theorem. As its application, we have constructed a statistically-hiding
non-interactive quantum bit commitment scheme. We note that 
by using the same discussion we can show the parallel composability of our
quantum bit commitment scheme.

In classical cryptography, the interactive hashing theorem has many applications. 
So, we hope that Non-interactive Quantum Hashing Theorem also has many applications
to quantum cryptography.

\section*{Acknowledgments}
A part of the research was done while the first author was at 
Universit\'e Paris-Sud 11 and supported by JST-CNRS project on
``Quantum Computation: Theory and Feasibility'' and the Ministry
of Education, Science, Sports and Culture, Grant-in-Aid for 
Scientific Research (B), 21300002, 2010 and for Exploratory Research,
20650001, 2010.


\end{document}